







\documentclass[twocolumn]{autart}    

\usepackage{graphicx}          
\usepackage{graphics,color}
\usepackage{amssymb,amsmath,amsfonts}
\usepackage{amsmath,mathrsfs,bm,url,times}
\usepackage{latexsym}
\usepackage{subfigure}
\usepackage{algorithmic}
\usepackage{algorithm}
\usepackage{epsfig,url}

\usepackage[]{natbib}

\newtheorem{theorem}{Theorem}
\newtheorem{lemma}{Lemma}

\newtheorem{remark}{Remark}
\newtheorem{corollary}{Corollary}
\newtheorem{property}{Property}
\newenvironment{proof}{\vspace{1ex}\noindent{\bf Proof.}\hspace{0.5em}}
    {\hfill\qed\vspace{1ex}}
\def\qed{\hfill \vrule height 6pt width 6pt depth 0pt}

\newcommand{\onetom}{1,\cdots,m}

\begin{document}

\begin{frontmatter}

\title{Distributed Event-triggered Consensus for Multi-agent Systems with Directed Topologies\thanksref{footnoteinfo}} 

\thanks[footnoteinfo]{This paper was not presented at any IFAC
meeting. Corresponding author W.~L.~Lu. Tel. +86-21-65642903.
Fax +86-21-56521137.}

\author[First]{Xinlei Yi}\ead{yix11@fudan.edu.cn},    
\author[First,Second]{Wenlian Lu}\ead{wenlian@fudan.edu.cn},               
\author[First,Third]{Tianping Chen}\ead{tchen@fudan.edu.cn}  

\address[First]{School of Mathematical Sciences, Fudan University, Shanghai 200433, China}
\address[Second]{Centre for Computational Systems Biology, Fudan University, Shanghai 200433, China }
\address[Third]{School of Computer Science, Fudan University, Shanghai 200433, China}

\begin{keyword}                           
Consensus; multi-agent systems; event-triggered; self-triggered; directed topologies.               
\end{keyword}                             

\begin{abstract}                          
In this paper, we study consensus problem of multi-agent system in networks with directed topology by event-triggered feedback control. We derive distributed criteria to determine the next observation time of each agent that are triggered by its in-neighbours' information and its own states respectively. We prove that if the network topology has spanning tree, then under the event-triggered principles, the multi-agent system reaches consensus with the consensus value equal to a weighted average of all agents' initial values. These principles, does not need any a priori knowledge of any global parameter for each agent, and Zeno Behaviours can be excluded in the last two principles. In addition, these results are extended to the case of self-triggered control, in terms of that the next triggering time of each agent is predicted based on the states at the last event time, which implies that the system states are not be monitored in a simultaneous way. The effectiveness of the theoretical results is illustrated by numerical examples.
\end{abstract}

\end{frontmatter}

\section{Introduction}
In the past decades, consensus problem in multi-agent systems, that a group of agents seeks to agree upon certain quantity of interest, has attracted increasing research interests. Please see \cite{Ros,Liubo} and the reference therein, which proved that the sufficient condition for consensus is connectivity of the undirected graph or possessing a spanning tree of of the directed graph for the networked system. However, most of these papers assumed the continuous feedback of states as controller. Motivated by the future trend that agent can be equipped with embedded microprocessors with limited resources to transmit and collect information, event-triggered control was introduced by \cite{Pta,Mmp,Mmc, Mmc1,Ptn,Crf} and self-triggered control was studied by \cite{Aapt,Xwmd,Mjm,Aap}. Compared with continuous state feedback, event-triggered control strategy induces staircase control signal, which is constant between the so-called {\em trigger times}, dependent or independent of the system states. In addition, self-triggered control is an extension of the event-triggered control. That is, each agent predicts the next trigger time by discrete observation of states. \cite{Dvd,Gss,Mxy,Gar,Yfg,Now} considered the consensus problem for multi-agent systems with event-triggered control. In particular, \cite{Dvd} provided event-triggered and self-triggered approaches in both centralized and distributed formulations. for networked system with {\em undirected} graph topology.

Even-triggering algorithms for consensus can be regarded to linearize and discrete original algorithm. following two papers \cite{LC2004,LC2007} are closely relating to this issue. In particular, the self-triggered algorithms.

More related to the present work, \cite{Zliu} studied event-triggered principle for consensus of multi-agent system with directed and weighted but {\em balanced} graph. \cite{Zzf} proposed event-triggered principle for directed unbalanced graph topology, which, however, requested a priori knowledge of some global parameters, for instance, the Laplacian, to determine the next trigger time for each agent.

In this paper, we study event-triggered and self-triggered principles for consensus in multi-agent system with directed and weighted topology of possibly reducible and unbalanced Laplacian. In comparison to the literature, there are three main contributions: (i) the directed graph topologies are not necessarily balanced; (ii) all the event-triggered principles in this paper are essentially distributed, dependent or independent of the in-neighbours' states but without any priori global knowledge, and the exclusion of Zeno behaviour is proved for two of the three principles; (iii) we propose self-triggered principles, which implies economic communication load in comparison to  simultaneously monitoring the system states.


$\|\cdot\|$ represents the Euclidean $L_2$-norm of vectors or the induced $L_2$-norm of square matrices. $\bf 1$ denotes the column vector with each component equal to $1$ of an appropriate dimension.  $\rho(\cdot)$ stands for the spectral radius of symmetry matrices and $\rho_2(\cdot)$ is its minimum positive eigenvalue. 

\section{Preliminaries and problem formulation}
In this section, we present some definitions in algebraic graph theory (please see \cite{Die} and \cite{Rah} for details) and the formulation of the problem.

A weighted directed graph is defined by $\mathcal G=(\mathcal V, \mathcal E, \mathcal A)$ of $m$ agents (or nodes), with the node (agent) set $\mathcal V =\{v_1,\cdots,v_m\}$, the link (edge) set $\mathcal E \subseteq \mathcal V \times \mathcal V$, and the weight matrix $\mathcal A =[a_{ij}]_{i,j\in\mathcal I}$ with $a_{ij}>0$ the weight of the link $e(i,j)=(v_i, v_j)\in \mathcal E$ if there is a directed link from agent $v_{j}$ to agent $v_{i}$, and $a_{ij}=0$ otherwise. We take $a_{ii}=0$ for all $i\in \mathcal I$, where $\mathcal I=\{1, 2,\cdots, m\}$. The in- and out- neighbours set of agent $v_i$ are defined as $N^{in}_i=\{v_j\in \mathcal V\mid a_{ij}>0\},~~~N^{out}_i=\{v_j\in \mathcal V\mid a_{ji}>0\}$, the (weighted) in-degree of agent $v_i$ as $deg^{in}(v_i)=\sum_{j=1}^{m}a_{ij}$, and thus the (in-)degree matrix of $\mathcal G$  as $D=diag[deg^{in}(v_1), \cdots, deg^{in}(v_m)]$. We also define the weighted Laplacian matrix associated with the digraph $\mathcal G$ as $L=\mathcal A-D$. 
It is know that $\mathcal G$ being strongly connected is equivalent to the corresponding Laplacian matrix $L$ being irreducible. Furthermore, by Lemma 1 in \cite{LC2004a}, 
and Lemma 2.6 in \cite{Ren}, we have the following result.
\begin{lemma}\label{lem2}
(i) If $L$ is irreducible, then $rank(L)=m-1$; and zero is an algebraically simple eigenvalue of $L$ with a positive vector $\xi^{\top}=[\xi_{1},\cdots,\xi_{m}]$ satisfying $\xi^{\top} L=0$ and $\sum_{i=1}^{m}\xi_{i}=1$. 
Moreover, let $\Xi=diag[\xi_{1},\cdots,\xi_{m}]$, then $R=\frac{1}{2}(\Xi L+L^{\top}\Xi)$ is a symmetric matrix with all row sums equal to zeros and has zero eigenvalue with algebraic dimension one.
\end{lemma}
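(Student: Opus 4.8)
The plan is as follows. The three claims in part (i) — that $rank(L)=m-1$, that zero is an algebraically simple eigenvalue, and that there is a positive left eigenvector $\xi$ normalized by $\sum_i\xi_i=1$ — are precisely the cited results (Lemma 1 of \cite{LC2004a} and Lemma 2.6 of \cite{Ren}), so I would simply invoke them. The genuine content is the ``Moreover'' assertion about $R$. Throughout I would use two elementary facts: since $L=\mathcal A-D$ has zero row sums, $L\mathbf 1=0$; and by (i) the left null vector satisfies $\xi^{\top}L=0$, equivalently $L^{\top}\xi=0$. I would also record that $\Xi\mathbf 1=\xi$.

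Symmetry is immediate: since $\Xi$ is diagonal, $R^{\top}=\tfrac12(L^{\top}\Xi+\Xi L)=R$. For the row sums I would compute
$$R\mathbf 1=\tfrac12\big(\Xi L\mathbf 1+L^{\top}\Xi\mathbf 1\big)=\tfrac12\big(\Xi(L\mathbf 1)+L^{\top}\xi\big)=0,$$
using $L\mathbf 1=0$ and $L^{\top}\xi=0$. Thus every row of $R$ sums to zero and $\mathbf 1\in\ker R$, so $0$ is an eigenvalue of $R$.

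The main obstacle is showing that this zero eigenvalue is simple, i.e. $rank(R)=m-1$. Since $R$ is symmetric, its algebraic and geometric multiplicities coincide, so it suffices to prove $\dim\ker R=1$. My plan is to recognize $-R$ as the combinatorial Laplacian of an undirected weighted graph. Computing entrywise, $R_{ij}=\tfrac12(\xi_i a_{ij}+\xi_j a_{ji})$ for $i\neq j$, which is nonnegative because $\xi_i>0$ and $a_{ij}\ge 0$, while the zero-row-sum property forces $R_{ii}=-\sum_{j\neq i}R_{ij}$. Hence $-R$ has nonpositive off-diagonal entries and diagonal entries equal to the corresponding row sums, i.e. it is the Laplacian of the undirected graph $\mathcal G'$ on $\mathcal V$ with symmetric edge weight $w_{ij}=\tfrac12(\xi_i a_{ij}+\xi_j a_{ji})\ge 0$. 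In particular $-R$ is positive semidefinite, and $w_{ij}>0$ precisely when $a_{ij}>0$ or $a_{ji}>0$, that is, when $\mathcal G$ has a directed edge between $i$ and $j$ in at least one direction.

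The final step is a connectivity argument. Since $L$ is irreducible, $\mathcal G$ is strongly connected, hence weakly connected; forgetting edge orientations therefore yields a connected undirected graph, which is exactly $\mathcal G'$. For the Laplacian of a connected undirected weighted graph it is standard that the zero eigenvalue is simple with kernel spanned by $\mathbf 1$. Consequently $\dim\ker(-R)=\dim\ker R=1$, so the zero eigenvalue of $R$ has algebraic multiplicity one, completing the proof. I expect the graph-theoretic reduction — converting irreducibility of the directed $L$ into connectivity of the symmetrized graph underlying $-R$ — to be the only nonroutine point; everything else is direct computation.
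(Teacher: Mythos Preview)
Your proposal is correct. The paper itself does not supply a proof of this lemma: the sentence preceding it simply cites Lemma~1 of \cite{LC2004a} and Lemma~2.6 of \cite{Ren} for the entire statement, including the ``Moreover'' claim about $R$. Your treatment of part~(i) therefore matches the paper exactly (invoke the citations), while for the claim on $R$ you go further than the paper by giving an explicit argument---the symmetry and zero-row-sum computations, and the identification of $-R$ with the Laplacian of the undirected symmetrized graph to deduce simplicity of the zero eigenvalue. That argument is valid (note the paper's sign convention $L=\mathcal A-D$, so the off-diagonal entries of $L$ are the $a_{ij}\ge 0$, consistent with your computation of $R_{ij}$), and nothing in the paper conflicts with it.
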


It can be seen that $R$ is negative semi-definite. Let $0=\lambda_{1}<\lambda_{2}\le\cdots\le\lambda_{m}$ be the eigenvalues of $-R$, counting the multiplicities. Let $U=\Xi-\xi\xi^{\top}$. Thus, $U$ has a single zero eigenvalue and its eigenvalues (counting
the multiplicities) can be sorted as $0=\mu_{1}<\mu_{2}\le\cdots\le\mu_{m}$. Then, it holds that $\lambda_{m}x^{\top}x\ge\min_{x\bot \bf 1}\{x^{\top}(-R)x\}\ge\lambda_{2}x^{\top}x,$ and $\mu_{m}x^{\top}x\ge\min_{x\bot \bf 1}\{x^{\top}Ux\}\ge\mu_{2}x^{\top}x.$ Therefore, $-R\ge \frac{\lambda_2}{\mu_m}U$ holds, i.e., $-R- \frac{\lambda_2}{\mu_m}U$ is positive semi-definite.

Consider the following multi-agent system with discontinuous diffusions:
\begin{eqnarray}
\begin{cases}
\dot{x}_{i}(t)=u_{i}(t)\\
u_{i}(t)=\sum_{j=1}^{m}L_{ij}x_{j}(t_{k_{j}(t)}^{j}),~i=\onetom\end{cases}\label{mg}
\end{eqnarray}
where $k_{i}(t)=arg\max_{k}\{t^{i}_{k}\le t\}$, the increasing time agent-wise sequence $\{t_{k}^{j}\}_{k=1}^{\infty}$, $j=\onetom$, named {\em trigger time}. We say that agent $v_i$ is triggered at $t=t^{i}_{k}$ if $v_{i}$ renews its state at $t=t^{i}_{k}$ and sends this renewed state $x_i(t^{i}_{k})$ to all its out-neighbours at the same time. Let $x(t)=[x_{1}(t),\cdots,x_{m}(t)]^{\top}$ and  $\hat{x}_{i}(t)=x_{i}(t_{k_{i}(t)}^{i})$,
$\hat{x}(t)=[\hat{x}_{1}(t), \cdots,\hat{x}_{m}(t)]^{\top}$,  $ e_{i}(t)=\hat{x}_{i}(t)-x_{i}(t)$ and $e(t)=[e_{1}(t),\cdots,e_{m}(t)]^{\top}$, $q_{i}(t)=\sum_{j=1}^{m}L_{ij}[x_{j}(t)-x_i(t)]^2$ and $\hat{q}_{i}(t)=\sum_{j=1}^{m}L_{ij}[\hat{x}_{j}(t)-\hat{x}_{i}(t)]^2$. $\bar{x}(t)=\sum_{j=1}^{m}\xi_{j}x_{j}(t)$.

\section{Event-triggered principles}\label{push}

In this section, we study event-triggered control for multi-agent
systems with directed and weighted topology.

Consider a candidate Lyapunov function (see \cite{LC2006}):
\begin{align}
&V(t)=\frac{1}{2}\sum_{i=1}^{m}\xi_{i}(x_{i}(t)-\bar{x}(t))^{2}
\label{V}
\end{align}
Then, the derivative of $V(t)$ along (\ref{mg}) gives
\begin{align}
\frac{d}{dt}V(t)=&\sum_{i=1}^{m}\xi_{i}x_{i}(t)
\sum_{j=1}^{m}L_{ij}x_{j}(t^{j}_{k_{j}(t)})
\nonumber\\
=&\sum_{i=1}^{m}\sum_{j=1}^{m}\xi_{i}L_{ij}x_{i}(t)x_{j}(t)\nonumber\\
&+\sum_{i=1}^{m}\sum_{j=1}^{m}\xi_{i}L_{ij}x_{i}(t)[x_{j}(t^{j}_{k_{j}(t)})
-x_{j}(t)]
\nonumber\\
=&-\frac{1}{2}\sum_{i=1}^{m}\xi_{i}q_i(t)
-\sum_{i=1}^{m}\sum_{j\neq i}^{m}\xi_{i}L_{ij}[x_{j}(t)-x_i(t)]e_j(t)\nonumber\\
\le&-\frac{1}{2}\sum_{i=1}^{m}\xi_{i}q_i(t)
+\sum_{i=1}^{m}\sum_{j\neq i}^{m}\xi_{i}L_{ij}\Big\{\frac{1}{4}[x_{j}(t)-x_i(t)]^2\nonumber\\
+&[e_j(t)]^2\Big\}
=-\frac{1}{4}\sum_{i=1}^{m}\xi_{i}q_i(t)+\sum_{i=1}^{m}\xi_i|L_{ii}|[e_i(t)]^2
\label{dV1.1}
\end{align}

Then,  we have the following result.
\begin{theorem}\label{coro1.1}
Suppose that $\mathcal G$ is strongly connected. For agent $v_i$, if trigger times $t^{i}_{1},\cdots,t^{i}_{k}$ are given, then use the following trigger strategy to find $t^{i}_{k+1}$:
\begin{align}
t_{k+1}^{i}=\max\big\{&\tau\ge t_{k}^{i}:~|x_{i}(t^{i}_{k})-x_{i}(t)|\nonumber\\
&\le \sqrt{\frac{\gamma_{i}}{4|L_{ii}|}q_{i}(t)},\forall t\in[t_{k}^{i},\tau]\big\}\label{event1.1c}
\end{align}
with $\gamma_{i}\in(0,1)$.
Then, if $\lim_{k\to\infty}t^{i}_{k}=+\infty$, then system (\ref{mg}) reaches consensus  exponentially; In addition, $\lim_{t\to\infty}x_{i}(t)=\sum_{j=1}^{m}\xi_{j}x_{j}(0)$ for all $i\in\mathcal I$.
\end{theorem}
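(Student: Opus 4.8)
The plan is to convert the trigger rule into a clean autonomous differential inequality for $V$ and then invoke the spectral comparison $-R\ge\frac{\lambda_2}{\mu_m}U$ recorded after Lemma~\ref{lem2}. First I would note that on each inter-event interval $[t^i_k,t^i_{k+1}]$ the frozen signal satisfies $\hat x_i(t)=x_i(t^i_k)$, so $e_i(t)=x_i(t^i_k)-x_i(t)$, and by the maximality in (\ref{event1.1c}) the defining condition is exactly $[e_i(t)]^2\le\frac{\gamma_i}{4|L_{ii}|}q_i(t)$, i.e. $|L_{ii}|[e_i(t)]^2\le\frac{\gamma_i}{4}q_i(t)$. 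Under the standing hypothesis $\lim_{k\to\infty}t^i_k=+\infty$ these intervals cover $[0,\infty)$ for every $i$, so this bound holds for all $t\ge0$. Substituting it into the estimate (\ref{dV1.1}) and writing $\gamma=\max_i\gamma_i<1$ gives, since each $q_i(t)\ge0$, the inequality $\frac{d}{dt}V(t)\le-\frac14\sum_i\xi_i(1-\gamma_i)q_i(t)\le-\frac{1-\gamma}{4}\sum_i\xi_i q_i(t)$ for almost every $t$.

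Next I would turn the dissipation $\sum_i\xi_i q_i(t)$ into a quadratic form. Expanding $q_i$ and using the row-sum property $\sum_j L_{ij}=0$ together with the left-null property $\xi^\top L=0$ from Lemma~\ref{lem2}, the cross terms telescope and yield $\sum_i\xi_i q_i(t)=-2\,x^\top R x=2\,x^\top(-R)x$; the same two identities also give $V(t)=\frac12\,x^\top U x$ with $U=\Xi-\xi\xi^\top$. The matrix inequality $-R\ge\frac{\lambda_2}{\mu_m}U$ then applies to the state $x=x(t)$ directly, producing $x^\top(-R)x\ge\frac{\lambda_2}{\mu_m}\,x^\top U x=\frac{2\lambda_2}{\mu_m}V(t)$. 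Chaining these relations collapses the bound to the scalar inequality $\frac{d}{dt}V(t)\le-\frac{1-\gamma}{2}\,x^\top(-R)x\le-\frac{(1-\gamma)\lambda_2}{\mu_m}V(t)$.

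Because $x$ is continuous and piecewise linear with only finitely many breakpoints on each compact interval, $V$ is absolutely continuous, so the almost-everywhere inequality integrates (Gr\"onwall) to $V(t)\le V(0)\exp\!\big(-\tfrac{(1-\gamma)\lambda_2}{\mu_m}t\big)$. Hence $V(t)\to0$ exponentially, and since $V(t)=\frac12\sum_i\xi_i(x_i(t)-\bar x(t))^2$ with every $\xi_i>0$, each $x_i(t)-\bar x(t)\to0$ exponentially, which is exactly exponential consensus. To pin down the limit I would differentiate $\bar x(t)=\xi^\top x(t)$: since $\dot x=L\hat x$ and $\xi^\top L=0$, we get $\frac{d}{dt}\bar x(t)=\xi^\top L\hat x(t)=0$, so $\bar x(t)\equiv\bar x(0)=\sum_j\xi_j x_j(0)$; combined with $x_i(t)\to\bar x(t)$ this gives $\lim_{t\to\infty}x_i(t)=\sum_j\xi_j x_j(0)$ for all $i$.

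The main obstacle is not any single estimate but the bookkeeping that makes the differential inequality legitimate: $V$ is nondifferentiable at the countable set of trigger times, so one must justify that the pointwise bound holds a.e. and that $V$ is absolutely continuous before integrating, and one must check that $|L_{ii}|=deg^{in}(v_i)>0$ (guaranteed by strong connectivity) so that (\ref{event1.1c}) is well posed. The spectral comparison is where the directed, unbalanced structure enters, and reconciling the constants among $\sum_i\xi_i q_i$, $x^\top(-R)x$, and $x^\top U x=2V$ is the computational heart of the argument.
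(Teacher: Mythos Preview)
Your proposal is correct and follows essentially the same route as the paper: substitute the trigger bound $|L_{ii}|e_i^2\le\frac{\gamma_i}{4}q_i$ into (\ref{dV1.1}), rewrite $\sum_i\xi_i q_i=2x^\top(-R)x$ and $V=\frac12 x^\top U x$, and then apply the spectral comparison $-R\ge\frac{\lambda_2}{\mu_m}U$ to arrive at $\dot V\le-\frac{(1-\gamma)\lambda_2}{\mu_m}V$, concluding with the invariance $\bar x(t)\equiv\bar x(0)$. The paper's proof simply writes the chain of inequalities directly, whereas you have spelled out the algebraic identities and the absolute-continuity justification for integrating the differential inequality; those additions are sound and do not change the structure of the argument.
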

\begin{proof}
From inequality (\ref{dV1.1}) and condition (\ref{event1.1c}),
we have
\begin{align*}
\frac{d}{dt}V(t)\le&-\sum_{i=1}^{m}\frac{1-\gamma_{i}}{4}\xi_iq_{i}(t)
\le-\frac{1-\gamma}{2}x^{\top}(t)(-R)x(t)\\
\le&-\frac{(1-\gamma)\lambda_2}{\mu_m}V(t)
\end{align*}
for all $t\ge 0$, where $\gamma=max\{\gamma_1,\cdots,\gamma_m\}$, which means $
V(t)\le e^{-[(1-\gamma)\lambda_2/\mu_m] t}V(0)$.
This implies that system (\ref{mg}) reaches consensus exponentially. Combined with $\bar{x}(t)=\bar{x}(0)$ for all $t\ge 0$,  we have $\lim_{t\to\infty}x_{i}(t)=\sum_{j=1}^{m}\xi_{j}x_{j}(0)$ for all $i\in\mathcal I$.
\end{proof}
\begin{remark}
The principle in Theorem \ref{coro1.1} indicates that each needs only its in-neighbours' state information, without any priori knowledge of any global parameter.
\end{remark}
\begin{property}\label{psh}
Under the condition in Theorem \ref{coro1.1}, for any agent $v_i$, if $x_{i}(t^{i}_k)\neq\bar{x}(0)$, then $t^{i}_{k+1}>t^{i}_{k}$ holds.
\end{property}
In fact, suppose that $t^{i}_k$ is agent $v_i$'s last trigger time. Then, from (\ref{event1.1c}), we have $x_{i}(t^{i}_k)=\lim_{t\to\infty}x_{i}(t)=\bar{x}(0)$. However, this could not happen since $x_{i}(t^{i}_k)\neq\bar{x}(0)$.

However, we cannot exclude Zeno behaviour for the event-triggered principle in Theorem \ref{coro1.1}. To out best knowledge, this problem has not been solved yet. Zeno behaviour is defined as that there are infinite number of triggers in a finite time interval (\cite{Joh}).
Inspired by \cite{Gss}, we give another event-triggered principle without Zeno behaviour.
\begin{theorem}\label{coro1.1e}
Suppose that $\mathcal G$ is strongly connected. For agent $v_i$, if trigger times $t^{i}_{1},\cdots,t^{i}_{k}$ are given, then use the following trigger strategy to find $t^{i}_{k+1}$:
\begin{align}
t_{k+1}^{i}=\max\Big\{\tau\ge t_{k}^{i}:~|x_{i}(t^{i}_{k})-x_{i}(t)|\le \delta_i(t),~\forall t\in[t_{k}^{j},\tau]\Big\}\label{event1.1e}
\end{align}
where $\delta_i(t)=\phi_i e^{-\alpha_it}$, with $\phi_i>0$, $\alpha_i>0$.  Then, system (\ref{mg}) reaches consensus  exponentially  and Zeno behaviour is excluded; $\lim_{t\to\infty}x_{i}(t)=\sum_{j=1}^{m}\xi_{j}x_{j}(0)$ for all $i\in\mathcal I$.
\end{theorem}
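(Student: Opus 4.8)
The plan is to prove two things: exponential consensus and the exclusion of Zeno behaviour, using the exponentially decaying threshold $\delta_i(t)=\phi_i e^{-\alpha_i t}$ in place of the state-dependent bound of Theorem~\ref{coro1.1}.

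For the consensus part I would start from the same differential inequality~(\ref{dV1.1}) derived for the Lyapunov function $V(t)$. Under the triggering rule~(\ref{event1.1e}) the measurement error satisfies $|e_i(t)|=|x_i(t_{k_i(t)}^i)-x_i(t)|\le\delta_i(t)=\phi_i e^{-\alpha_i t}$ for all $t$, so the error term in~(\ref{dV1.1}) is controlled by a known exponentially decaying function rather than by $q_i(t)$. Substituting this bound gives
\begin{align*}
\frac{d}{dt}V(t)\le&-\frac14\sum_{i=1}^m\xi_i q_i(t)+\sum_{i=1}^m\xi_i|L_{ii}|\phi_i^2 e^{-2\alpha_i t}.
\end{align*}
Using $\sum_i\xi_i q_i(t)=x^\top(t)(-2R)x(t)$ together with the spectral inequality $-R\ge\frac{\lambda_2}{\mu_m}U$ established before~(\ref{V}), and the identity $V(t)=\frac12 x^\top(t)U x(t)$, the first term is bounded above by $-\frac{\lambda_2}{\mu_m}V(t)$. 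Writing $\beta=\lambda_2/\mu_m$ and $g(t)=\sum_i\xi_i|L_{ii}|\phi_i^2 e^{-2\alpha_i t}$, we obtain $\dot V(t)\le-\beta V(t)+g(t)$. This is a linear scalar differential inequality; by the comparison lemma (variation of constants) its solution obeys $V(t)\le e^{-\beta t}V(0)+\int_0^t e^{-\beta(t-s)}g(s)\,ds$. Since $g$ is a finite sum of decaying exponentials, the convolution integral is again a finite sum of exponentials, so $V(t)\to 0$ at an exponential rate governed by $\min\{\beta,2\alpha_1,\dots,2\alpha_m\}$, which proves exponential consensus. Because $\bar x(t)=\bar x(0)$ is preserved (the left eigenvector property $\xi^\top L=0$ makes $\frac{d}{dt}\bar x(t)=\xi^\top\dot x(t)=\xi^\top L\hat x(t)=0$), the common limit is $\bar x(0)=\sum_j\xi_j x_j(0)$, giving the stated consensus value.

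For the exclusion of Zeno behaviour I would bound the growth rate of the error between consecutive triggers. On any interval $[t_k^i,t)$ the held state is constant, so $\dot x_i(t)=u_i(t)=\sum_j L_{ij}\hat x_j(t)$, and $|\dot e_i(t)|=|\dot x_i(t)|\le\sum_j|L_{ij}|\,|\hat x_j(t)|$. Since consensus holds exponentially, each $\hat x_j(t)$ stays bounded (indeed converges to $\bar x(0)$), so there is a constant $M$ with $|\dot e_i(t)|\le M$ for all $t$, whence $|e_i(t)|\le M(t-t_k^i)$ starting from $e_i(t_k^i)=0$. A new trigger at $t_{k+1}^i$ requires the error to reach the threshold, i.e.\ $M(t_{k+1}^i-t_k^i)\ge\delta_i(t_{k+1}^i)=\phi_i e^{-\alpha_i t_{k+1}^i}$, which yields a strictly positive lower bound on the inter-event time $t_{k+1}^i-t_k^i$. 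To rule out Zeno one must show this lower bound is not summable; the cleanest route is to argue by contradiction: if there were infinitely many triggers accumulating at a finite time $T^*$, then the inter-event times would tend to zero while the right-hand side $\phi_i e^{-\alpha_i t_{k+1}^i}\to\phi_i e^{-\alpha_i T^*}>0$ stays bounded away from zero, contradicting $M(t_{k+1}^i-t_k^i)\ge\phi_i e^{-\alpha_i T^*}$.

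The main obstacle is the Zeno argument rather than the consensus estimate. The consensus part is a routine comparison-lemma computation once the error is replaced by its exponential envelope. The delicate point is establishing a uniform bound $M$ on $|\dot x_i(t)|$: this requires knowing boundedness of all held states $\hat x_j(t)$ uniformly in time, which follows from the already-proved exponential convergence but must be invoked carefully so that the two halves of the proof are not circular — convergence is established first (using only the error envelope, independently of inter-event times), and only afterwards is the resulting boundedness used to lower-bound the inter-event times. The fact that the threshold $\delta_i(t)$ never vanishes in finite time is exactly what breaks the accumulation, in contrast to the state-dependent threshold of Theorem~\ref{coro1.1} where $q_i(t)$ can vanish and Zeno cannot be excluded.
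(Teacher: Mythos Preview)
Your proof is correct and follows essentially the same route as the paper: bound $\dot V(t)\le -cV(t)+g(t)$ with $c=\lambda_2/\mu_m$ and $g$ an exponentially decaying envelope, apply Gr\"onwall/variation of constants for the consensus part, then use boundedness of $\dot x_i$ to lower-bound inter-event times for the Zeno part. The only cosmetic difference is that the paper obtains the differential inequality via a separate Young-inequality estimate (display~(\ref{dV1}), introducing an auxiliary parameter $a$) rather than directly from~(\ref{dV1.1}) as you do, and it gives an explicit trigger count on $[0,T]$ instead of your accumulation-point contradiction; both variants lead to the same conclusions.
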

\begin{proof}
Direct calculations give
\begin{align}
&\frac{d}{dt}V(t)=x^{\top}(t)Rx(t)+x^{\top}(t)\Xi Le(t)\nonumber\\
\le& x^{\top}(t)Rx(t)+\frac{a}{2}x^{\top}(t)\Xi LL^{\top}\Xi x(t)+\frac{1}{2a}e^{\top}(t)e(t)\nonumber\\
\le&-[1-\frac{a\rho(Q)}{2\lambda_2}]x^{\top}(t)(-R)x(t)+\frac{1}{2a}e^{\top}(t)e(t)\label{dV1}
\end{align}
with any $a>0$ and $Q=\Xi LL^{\top}\Xi$,

Let $\phi=\max\{\phi_1,\cdots,\phi_m\}$, $\alpha=\min\{\alpha_1,\cdots,\alpha_m,\frac{\lambda_2}{2\mu_m}\}$, and then $\delta(t)=\phi e^{-\alpha t}$.
Let $a=\lambda_2/\rho(\Xi LL^{\top}\Xi)$ in (\ref{dV1}). Then under the condition (\ref{event1.1e}),
$dV(t)/dt\le-(\lambda_2/\mu_m)V(t)+[m/(2a)]\delta^2(t)$ holds. By the Gr\"{o}nwell inequality, we have
\begin{align}
&V(t)\le e^{-ct}V(0)+\frac{m}{2a}\int_{0}^{t}e^{-c(t-s)}\delta^2(s)ds\nonumber\\
=&\begin{cases}
e^{-ct}V(0)+\frac{m\phi^2}{2a(c-2\alpha)}[e^{-2\alpha t}-e^{-ct}],~if~\alpha<\frac{c}{2}\\
e^{-ct}V(0)+\frac{m}{2a}e^{-ct}t,~if~\alpha=\frac{c}{2}
\end{cases}\nonumber\\
\le& k_{\delta}e^{-\alpha t}\label{Ve}
\end{align}
with $c=\frac{\lambda_2}{\mu_m}$ and $k_{\delta}=\begin{cases}V(0)+\frac{m}{2a(c-2\alpha)}>0,~if~\alpha<\frac{c}{2}\\
V(0)+\frac{m}{2a}T_0>0,~if~\alpha=\frac{c}{2}\end{cases}$ is a constant, where $T_0>0$ with $e^{-\alpha t}t\le1$ for all $t\ge T_0$. This implies that system (\ref{mg}) reaches consensus  exponentially with $\lim_{t\to\infty}x_{i}(t)=\sum_{j=1}^{m}\xi_{j}x_{j}(0)$ for all $i\in\mathcal I$.

It can be seen that $\|x(t)\|$ is bounded on $[0,+\infty)$, namely there exists $M>0$ such that $|\sum_{j=1}^{m}L_{ij}\hat{x}_j(t)|\le M$, $\forall i\in\mathcal I$ and $\forall t\ge0$. Thus
$|x_{i}(t)-x_{i}(t^{i}_{k_{i}(t)})|\le M(t-t^{i}_{k_{i}(t)})$.

Then, to prove Zeno behaviour exclusion, let $t^{i}_k=t^{i}_{k_{i}(t)}$ for all $t\in[0,T]$, which implies $
\delta_i(t)\ge\phi_i e^{-\alpha_iT}e^{-\alpha_i(t-t^{i}_k)}\ge\phi_i e^{-\alpha_iT}[1-\alpha_i(t-t^{i}_k)]$.
For agent $v_i$, when the event is triggered, i.e., the equality of (\ref{event1.1e}) holds at $t=t^{i}_{k+1}$, which implies $|x_{i}(t^{i}_{k+1})-x_{i}(t^{i}_{k})|=\delta_i(t^{i}_{k+1})$, namely,
\begin{align}
M(t^{i}_{k+1}-t^{i}_{k})\ge\phi_i e^{-\alpha_iT}[1-\alpha_i(t^{i}_{k+1}-t^{i}_{k})]\label{event1.1esss}
\end{align}
Hence, every inter-event time is lower bounded by $\frac{1}{\frac{M}{\phi_i}e^{\alpha_iT}+\alpha_i}$. Therefore, for agent $v_i$, there exist at most $N=\lceil(\frac{M}{\phi_i}e^{\alpha_iT}+\alpha_i)T+1\rceil$ triggers in $[0,T]$, where $\lceil z\rceil$ stands for the largest integer less than $z$.
\end{proof}
\begin{remark}
(i) Compared with the work \cite{Gss}, the trigger times of each agent determined by Theorem \ref{coro1.1e} only depends on its own state information, and is independent of any priori global knowledge or its neighbours'. (ii) If $\phi_1=\cdots=\phi_m=\delta_0>0$ and $0<\alpha_1=\cdots=\alpha_m=\delta_1<\frac{\lambda_2}{2\mu_m}$, then similar to the proof of Theorem 3.2 in \cite{Gss}, we have $t^i_{k+1}-t^i_{k}\ge\frac{1}{\omega_{i}+\delta_1}>0$, where $\omega_{i}=2|L_{ii}|+\sqrt{\frac{\rho(L^{\top}L)}{\mu_2}k_{\delta}}$ and $k_{\delta}=\frac{V(0)}{\delta^2_0}+\frac{m}{2a(c-2\delta_1)}>0$. So, the lower-bound of the inter event intervals depends of the ``global'' information of the Laplacian. However, for the triggering principle, we do not need this global information.
\end{remark}


Following event-triggered principle (also see \cite{Gar} and \cite{Now}) can exclude Zeno behaviour and without knowing any priori global knowledge.
\begin{theorem}\label{thmabc}
Suppose that $\mathcal G$  is strongly connected. For agent $v_i$, pick $\gamma_{i}\in(0,1)$ and $0<\varepsilon_i<\sqrt{\gamma_i}/(2|L_{ii}|)$. With $t^{i}_{1},\cdots,t^{i}_{k}$, the following triggered strategy is used to find $t^{i}_{k+1}$:
\begin{enumerate}
\item Obtain $\tau^{i}_{k+1}$:
\begin{align}
\tau_{k+1}^{i}=\max\big\{\tau\ge t_{k}^{i}:~g_i(e_i(t))\le0,\forall t\in[t_{k}^{i},\tau]\big\}\label{event1.1cr}
\end{align}
where $g_i(e_i(t))=|e_i(t)|-\sqrt{\frac{\gamma_{i}}{4|L_{ii}|}\hat{q}_{i}(t)}$;
\item If agent $v_i$ receives renewed information from its in-neighbours in $(t^{i}_{k},\tau_{k+1}^{i}]$, letting  $t_0$ be the first time when agent $v_i$ receives renewed information from its in-neighbours with $t_0-t^{i}_{k}<\varepsilon_i$, then a$t^{i}_{k+1}=t_0$; Otherwise $t^{i}_{k+1}=\tau^{i}_{k+1}$.
\end{enumerate}
Then, system (\ref{mg}) reaches consensus exponentially excluding the Zeno behaviour\footnote{It could happen that $t^{i}_{k+1}$ cannot be found by Theorem \ref{thmabc}, i.e $t^{i}_{k}$ is agent $v_i$'s last trigger time and it does not trigger after $t^{i}_{k}$ any more. If so, obviously, for agent $v_i$, we can conclude that there do not exist an infinite number of triggers in a finite time period. Without loss of generality, we assume that $t^{i}_{l},l=1,2,\cdots$ exist.} and with $\lim_{t\to\infty}x_{i}(t)=\sum_{j=1}^{m}\xi_{j}x_{j}(0)$ for all $i\in\mathcal I$.
\end{theorem}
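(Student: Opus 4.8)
The plan is to reuse the Lyapunov function $V(t)$ from (\ref{V}) and to show that the strategy (\ref{event1.1cr}) together with its second step enforces, for every $t\ge0$ and every $i$, the pointwise sampling invariant $|e_i(t)|\le\sqrt{\frac{\gamma_i}{4|L_{ii}|}\hat q_i(t)}$, i.e. $g_i(e_i(t))\le0$. This is exactly the condition under which $\frac{d}{dt}V$ can be made negative. Whenever the invariant is about to fail -- either because $|e_i|$ has grown up to the current threshold, or because an incoming message has just lowered $\hat q_i$ -- agent $v_i$ triggers and resets $e_i=0$; the second step, with $0<\varepsilon_i<\sqrt{\gamma_i}/(2|L_{ii}|)$, is what forces this reset in time when an arriving message shrinks the threshold. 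Granting the invariant, the consensus value and the rate follow the template of Theorems \ref{coro1.1}--\ref{coro1.1e}: $\bar x(t)\equiv\bar x(0)$ is conserved along (\ref{mg}), so it suffices to show $V(t)\to0$ exponentially, which then gives $\lim_{t\to\infty}x_i(t)=\sum_j\xi_jx_j(0)$.

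For the decay estimate I would extract two upper bounds on $\frac{d}{dt}V$ from the single invariant. First, inserting $|L_{ii}|e_i^2\le\frac{\gamma_i}{4}\hat q_i$ into (\ref{dV1.1}) and using the identities $\sum_i\xi_iq_i=2x^{\top}(-R)x$ and $\sum_i\xi_i\hat q_i=2\hat x^{\top}(-R)\hat x$ (both consequences of $\xi^{\top}L=0$ and $L\mathbf{1}=0$, with $\hat x^{\top}\Xi L\hat x=\hat x^{\top}R\hat x$) yields $\frac{d}{dt}V\le-\frac12 x^{\top}(-R)x+\frac{\gamma}{2}\hat x^{\top}(-R)\hat x$, where $\gamma=\max_i\gamma_i$. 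Second, writing $\frac{d}{dt}V=\hat x^{\top}R\hat x-e^{\top}\Xi L\hat x$, bounding $|\sum_jL_{ij}(\hat x_j-\hat x_i)|\le\sqrt{|L_{ii}|}\sqrt{\hat q_i}$ by Cauchy--Schwarz and invoking the invariant gives $\frac{d}{dt}V\le-(1-\sqrt{\gamma})\,\hat x^{\top}(-R)\hat x$. A convex combination of the two bounds, with the weight chosen so that the stale terms $\hat x^{\top}(-R)\hat x$ cancel, removes $\hat x$ entirely and leaves $\frac{d}{dt}V\le-c'\,x^{\top}(-R)x$ for an explicit $c'>0$. Since one checks $V=\tfrac12 x^{\top}Ux$ and the excerpt gives $-R\ge\frac{\lambda_2}{\mu_m}U$, this becomes $\frac{d}{dt}V\le-\frac{2c'\lambda_2}{\mu_m}V$, hence exponential consensus.

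For Zeno exclusion I would first use boundedness of $V$ to conclude that $x$, $\hat x$, and every $u_i$ are bounded; more sharply, $|u_i|=|\sum_jL_{ij}(\hat x_j-\hat x_i)|\le\sqrt{|L_{ii}|}\sqrt{\hat q_i}$ by the same Cauchy--Schwarz step. On any interval carrying no reception at $v_i$ the quantity $\hat q_i$ is frozen at some value $\hat q_i^{\,k}$, so $|e_i(t)|\le\sqrt{|L_{ii}|\hat q_i^{\,k}}\,(t-t_k^i)$, whereas a self-trigger requires $|e_i|=\sqrt{\frac{\gamma_i}{4|L_{ii}|}\hat q_i^{\,k}}$. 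The factor $\hat q_i^{\,k}$ cancels and the inter-event time is bounded below by the state-independent constant $\frac{\sqrt{\gamma_i}}{2|L_{ii}|}$. Consequently every purely self-triggered interval (the branch $t^i_{k+1}=\tau^i_{k+1}$) has length at least $\varepsilon_i$, since in the no-reception sub-case it is at least $\frac{\sqrt{\gamma_i}}{2|L_{ii}|}>\varepsilon_i$ and in the late-reception sub-case $\tau^i_{k+1}\ge t_0\ge t^i_k+\varepsilon_i$; thus only finitely many of these occur on any $[0,T]$.

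The remaining, and genuinely delicate, case is the reception-triggered branch $t^i_{k+1}=t_0<t^i_k+\varepsilon_i$, whose length is \emph{not} bounded below: the real danger is a cascade in which agents re-trigger only in reaction to one another, so a uniform per-agent lower bound on the inter-event times does not exist and a naive summation over agents does not close. Here I would argue by contradiction from an accumulation point $t^{\ast}$ of the union of all trigger times inside a finite window. The uniform self-trigger bound forces each agent to make only finitely many self-triggers near $t^{\ast}$, so every agent that accumulates triggers at $t^{\ast}$ must do so through reception triggers, each of which coincides with a trigger of one of its in-neighbours at the same instant. Propagating this dependence through the finite, strongly connected graph should produce a set of triggers at $t^{\ast}$ in which no member is originated by a self-trigger, contradicting the fact that a reception trigger presupposes a broadcast from an in-neighbour. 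I expect this global, accumulation-point bookkeeping -- reconciling arbitrarily small individual inter-event times with the absence of Zeno behaviour -- to be the main obstacle; by comparison, the convergence estimate is routine once the sampling invariant is secured.
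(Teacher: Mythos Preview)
Your Lyapunov argument for exponential consensus is correct and takes a genuinely different route from the paper's. The paper rewrites $\dot V$ entirely in hatted variables (their (\ref{dV1.2})), obtaining $\dot V\le-\tfrac{1-\gamma}{2}\,\hat x^{\top}(-R)\hat x$, and then bounds $V$ itself from above by a multiple of $\hat x^{\top}(-R)\hat x$ via $x=\hat x-e$ together with the invariant, arriving at $\dot V\le-k_v^{-1}V$. Your alternative---producing one bound on $\dot V$ in terms of $x^{\top}(-R)x$ and one in terms of $\hat x^{\top}(-R)\hat x$, then cancelling the latter by a convex combination---is tidy and gives an explicit rate without the detour through $V\le C\,\hat x^{\top}(-R)\hat x$. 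One small mis-statement: the sampling invariant $g_i(e_i(t))\le0$ holds on each $[t_k^i,t_{k+1}^i]$ simply because $t_{k+1}^i\le\tau_{k+1}^i$ by construction; step~2 of the rule plays no role in maintaining it---its purpose is solely Zeno exclusion.

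On Zeno, the self-trigger lower bound $\sqrt{\gamma_i}/(2|L_{ii}|)$ and the conclusion that every $\tau$-branch inter-event time is at least $\varepsilon_i$ are exactly what the paper derives in (\ref{ietl}), so that part is fine. The gap is in your accumulation-point argument for the reception branch. Your proposed contradiction---that a pure cycle of reception-triggers is impossible because ``a reception trigger presupposes a broadcast from an in-neighbour''---is not a contradiction: a reception-trigger \emph{is} itself a broadcast, so at a single instant an entire cycle of agents can, in principle, reception-trigger one another. Nothing you wrote rules this out. The paper avoids the accumulation viewpoint and instead runs a direct counting argument on any window $\mathcal J=[T,T+\tfrac12\varepsilon_0]$ with $\varepsilon_0=\min_i\varepsilon_i$: tracking the set $\bigcup_{l}\mathcal I_l$ of agents that have triggered at the successive reception instants $t_1<t_2<\cdots$ of $v_i$ inside $\mathcal J$, they show each new instant must strictly enlarge this union (otherwise the gap would be $\ge\varepsilon_0>|\mathcal J|$), so at most $m$ triggers of $v_i$ fit in $\mathcal J$. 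If you want to salvage the accumulation approach you would need a genuine extra ingredient---for instance, proving that a cycle of simultaneous reception-triggers forces all their \emph{previous} triggers to coincide as well, and then iterating backward to $t=0$---rather than the sentence you currently have.
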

\begin{proof}
It also can be written as
\begin{align}
\frac{d}{dt}V(t)
=&-\frac{1}{2}\sum_{i=1}^{m}\xi_{i}\hat{q}_{i}(t)
-\sum_{i=1}^{m}\sum_{j\neq i}^{m}\xi_{i}L_{ij}e_{i}(t)[\hat{x}_{j}(t)-\hat{x}_{i}(t)]\nonumber\\
\le&-\frac{1}{4}\sum_{i=1}^{m}\xi_{i}\hat{q}_{i}(t)
+\sum_{i=1}^{m}|L_{ii}|e_{i}^2(t)\nonumber\\
\le&-\frac{1}{4}\sum_{i=1}^{m}(1-\gamma_i)\xi_{i}\hat{q}_{i}(t)\nonumber\\
\le&\frac{1-\gamma}{2}[\hat{x}(t)]^{\top}R\hat{x}(t)
\label{dV1.2}
\end{align}
where $\gamma=\max\{\gamma_1,\cdots,\gamma_m\}$. 
In addition,
\begin{align*}
V(t)=&\frac{1}{2}x^{\top}(t)Ux(t)
\le\frac{\mu_m}{2\lambda_2}x^{\top}(t)(-R)Rx(t)\nonumber\\
=&\frac{\mu_m}{2\lambda_2}[\hat{x}(t)-e(t)]^{\top}(-R)[\hat{x}(t)-e(t)]\nonumber\\
\le&\frac{\mu_m}{\lambda_2}\Big\{[\hat{x}(t)]^{\top}(-R)\hat{x}(t)
+[e(t)]^{\top}(-R)e(t)\Big\}
\end{align*}
and combine (\ref{event1.1cr}), we have
\begin{align*}
[e(t)]^{\top}(-R)e(t)\le&\lambda_m\|e(t)\|^2\nonumber\\
\le&\frac{\lambda_m\gamma}{(2\min_{i}\{|L_{ii}|\xi_i\})}[\hat{x}(t)]^{\top}(-R)\hat{x}(t)
\end{align*}
So, we have
$dV(t)/dt\le-(1/k_v)V(t)$
where $k_v=\frac{2\mu_m}{\lambda_2(1-\gamma)}[1+\frac{\lambda_m\gamma}{2\min_{i}\{|L_{ii}|\xi_i\}}]>0$. Therefore, system (\ref{mg}) reaches consensus exponentially and $\lim_{t\to\infty}x_{i}(t)=\sum_{j=1}^{m}\xi_{j}x_{j}(0)$ can be derived by the same way above.

Next, we prove that the Zeno behaviour can be excluded. To start with, we will prove that, for agent $v_i$, under the condition that trigger times $t^{i}_{1}=0,\cdots,t^{i}_{k}$ are given, if it does not receive new state information from its in-neighbours after $t^{i}_k$, then the lower bound of $t^{i}_{k+1}-t^{i}_k$ is $\frac{\sqrt{\gamma_i}}{2|L_{ii}|}$. In this case, we have $e_{i}(t^{i}_{k})=0$ and $\sum_{j=1}^{m}L_{ij}\hat{x}_j(t)=\sum_{j=1}^{m}L_{ij}\hat{x}_j(t^{i}_{k})$ since no new information is received for $t\ge t^{i}_{k}$, which implies
$e_{i}(t)=-(t-t^{i}_{k})\sum_{j=1}^{m}L_{ij}\hat{x}_j(t^{i}_{k})$.

Note that if $\sum_{j=1}^{m}L_{ij}\hat{x}_j(t^{i}_{k})=0$, no trigger will ever happen since $e_{i}(t)=0$ for all $t\ge t^{i}_{k}$. If $\sum_{j=1}^{m}L_{ij}\hat{x}_j(t^{i}_{k})\neq0$ (thus $\hat{q}_{i}(t)=\hat{q}_{i}(t^{i}_{k})\neq0$, for $t>t^{i}_{k}$), the event (\ref{event1.1cr}) prescribes a trigger at the time $t^{*}\ge t^{i}_{k}$ satisfying$
(t^{*}-t^{i}_{k})^2\Big[\sum_{j=1}^{m}L_{ij}\hat{x}_j(t^{i}_{k})\Big]^2-\frac{\gamma_{i}}{4|L_{ii}|}\hat{q}_{i}(t^{i}_{k})=0.
$
Thus, we have
\begin{align}
&t^{*}-t^{i}_{k}
=\sqrt{\frac{\gamma_{i}}{4|L_{ii}|}\frac{\sum_{j\neq i}^{m}L_{ij}[\hat{x}_{j}(t^{i}_{k})-\hat{x}_{i}(t^{i}_{k})]^2}{\Big[\sum_{j\neq i}^{m}L_{ij}(\hat{x}_j(t^{i}_{k})-\hat{x}_i(t^{i}_{k}))\Big]^2}}\nonumber\\
\ge&\sqrt{\frac{\gamma_{i}}{4|L_{ii}|}\frac{\sum_{j\neq i}^{m}L_{ij}[\hat{x}_{j}(t^{i}_{k})-\hat{x}_{i}(t^{i}_{k})]^2}
{\sum_{j\neq i}^{m}L_{ij}\sum_{j\neq i}^{m}L_{ij}[\hat{x}_{j}(t^{i}_{k})-\hat{x}_{i}(t^{i}_{k})]^2}}\nonumber\\
=&\frac{\sqrt{\gamma_i}}{2|L_{ii}|}\label{ietl}
\end{align}

Let $\varepsilon_0=min\{\varepsilon_1,\cdots,\varepsilon_m\}$. We will prove that, for agent $v_i$, in any time interval with length of $\frac{1}{2}\varepsilon_0$, i.e., $\mathcal J=[T,T+\frac{1}{2}\varepsilon_0]$ for any $T\ge0$, there exist at most $m$ triggers.
Without loss of generality, let $k_{0}$ be a positive integer which satisfies $t^{i}_{k_0-1}<T$ and $t^{i}_{k_0}\ge T$. In fact, if there are no such $k_{0}$, then $\mathcal J$ has no triggers.

First, if no information is received during the time period $(t^{i}_{k_{0}},t^{i}_{k_{0}}+\varepsilon_i)$, then from (\ref{ietl}), it holds that $t^{i}_{k_{0}+1}-t^{i}_{k_{0}}\ge\varepsilon_i>\varepsilon_{0}$. Hence, there is only one trigger in $\mathcal J$. Otherwise, if at least one in-neighbour of agent $v_{i}$ triggers at some time in $\mathcal J$, then letting $t_1\in(t^{i}_{k},t^{i}_{k}+\varepsilon_i)$ be the first time after $t^{i}_{k_{0}}$ that agent $v_{i}$ receives trigger information from its in-neighbours. Then, $t^{i}_{k+1}=t_1$ according to the second item of the trigger rule. There exists a nonempty set, denoted by $\mathcal{I}_{1}$, of all agents who trigger at time $t_1$.

Again, if there are more triggers in $\mathcal J$. Let $t_{2}$ be the next trigger time after $t_{1}$. We claim that there must be a nonempty set of agents, denoted by $\mathcal I_{2}$, which trigger at $t_{1}$, such that $\mathcal I_{2}\bigcup\mathcal I_{1}-\mathcal I_{1}\ne\emptyset$. In fact, if not so, $t_{2}-t_{1}\ge\epsilon_{0}$ holds, which implies $t_{2}\notin\mathcal J$. That is, $\#(\mathcal I_{2}\bigcup\mathcal I_{1})\ge 2$.

Reasoning repetitively in this way, let $t_{l}$ be the $l+1$-th trigger time of agent $i$ with $t_{0}=t^{i}_{k}$ and  $\mathcal I_{l}$ be the set of agents that triggers at $t_{l}$. By induction, we have $\bigcup_{p=0}^{l-1}\mathcal{I}_{p}\ge l$. Since there is at most $m$ in the network, we can conclude that there are at most $m$ triggers in $\mathcal J$. This completes the proof.
\end{proof}

Inspired by \cite{LC2006}, previous theorems can be extended to the directed and reducible topology. Assume that $\mathcal G$ has a spanning tree. Without loss of generality, $L$ can be written in the
following Perron-Frobenius form:
\begin{eqnarray}
L=\left[\begin{array}{llll}L^{1,1}&L^{1,2}&\cdots&L^{1,K}\\
0&L^{2,2}&\cdots&L^{2,K}\\
\vdots&\vdots&\ddots&\vdots\\
0&0&\cdots&L^{K,K}
\end{array}\right]\label{PF}
\end{eqnarray}
where $L^{k,k}$, with dimension $n_{k}$, associated with the $k$-th strongly connected component (SCC) of $\mathcal G$, denoted by $SCC_{k}$, $k=1,\cdots,K$, and for each $k=1,\cdots,K-1$, there exists some $j>k$, such that $L^{k,j}\ne 0$.

For $SCC_{k}$, define $x^{k}(t)=[x_{1}^{k}(t),\cdots,x_{n_{k}}^{k}(t)]^{\top}$, $\hat{x}^{k}(t)=[\hat{x}_{1}^{k}(t),\cdots,\hat{x}_{n_{k}}^{k}(t)]^{\top}$, $e^{k}(t)=[e_{1}^{k}(t),\cdots,e_{n_{k}}^{k}(t)]^{\top}$, where $e^{k}_{i}(t)=\hat{x}_{i}^{k}(t)-x^{k}_{i}(t)$, and $\hat{x}_{i}^{k}(t)=x^{k}_{i}(t^{i+n_{k-1}}_{k_{i+n_{k-1}}(t)})$. For $v_i\in SCC_{k}$, denote $q^{k}_{i}(t)=\sum_{j=1}^{n_k}L^{k,k}_{ij}[x^{k}_{j}(t)-x^{k}_{i}(t)]^2$, $\hat{q}^{k}_{i}(t)=\sum_{j=1}^{n_k}L^{k,k}_{ij}[\hat{x}^{k}_{j}(t)-\hat{x}^{k}_{i}(t)]^2$,
$q^{k,K}_{i}(t)=\sum_{p=k+1}^{K}\sum_{j=1}^{n_p}L^{k,p}_{ij}[x^{p}_{j}(t)-x^{k}_{i}(t)]^2$, and $\hat{q}^{k,K}_{i}(t)=\sum_{p=k+1}^{K}\sum_{j=1}^{n_p}L^{k,p}_{ij}[\hat{x}^{p}_{j}(t)-\hat{x}^{k}_{i}(t)]^2$.
Define an auxiliary matrix $\tilde{L}^{k,k}=[\tilde{L}^{k,k}_{ij}]_{i,j=1}^{n_{k}}$ as\\
$
\tilde{L}^{k,k}_{ij}=\begin{cases}L^{k,k}_{ij}&i\ne j\\
-\sum_{p=1,p\not=i}^{n_{k}}L^{k,k}_{ip}&i=j\end{cases}
$,
then each $\tilde{L}^{k,k}$, $k=1,\cdots,K$, is irreducible or is of dimension one. Then, let $D^{k}=L^{k,k}-\tilde{L}^{k,k}=diag[D^{k}_{1},\cdots,D^{k}_{n_{k}}]$, which is a diagonal negative semi-definite matrix and has at least one diagonal negative (nonzero). Actually, $D^{k}_{i}=-\sum_{p=k+1}^{K}\sum_{j=1}^{n_p}L^{k,p}_{ij}$. Lemma \ref{lem2} implies that we can find  ${\xi^{k}}^{\top}$ to be the left eigenvector of $\tilde{L}^{k,k}$ corresponding to the eigenvalue zero and the sum of its components equals $1$. (Actually, $[0,\cdots,0,\xi^{K}_{1},\cdots,\xi^{K}_{n_K}]^{\top}$ is the nonnegative left eigenvector of $L$ corresponding to the eigenvalue zero and the sum of its components equals $1$.) Finally, let $\Xi^{k}=diag[\xi^{k}]$ $Q_{k}=\frac{1}{2}[\Xi^{k}L^{k,k}+(\Xi^{k}L^{k,k})^{\top}]$.
\begin{property}\label{p1}
Under the setup above, $Q_{k}$ is negative definite for all $k<K$.
\end{property}
Now we begin to extend the result in Theorem \ref{coro1.1} to the case of directed and reducible topology. For simplicity, hereby we only consider the case of $K=2$. The case $K>2$ can be treated in the same way.
First, according to Theorem \ref{coro1.1}, one can conclude that $x^{2}_{j}(t)$ and $\hat{x}_{j}^{2}(t)$, $j=1,\cdots,n_{2}$, converge to $\nu(0)$ exponentially, where $\nu(t)=\sum_{p=1}^{n_{2}}\xi^{2}_{p}x^{2}_{p}(t)$.

Then, for the subsystem of the $SCC_1$, defining $
V_{1}(t)=\frac{1}{2}\sum_{i=1}^{n_{1}}\xi^{1}_{i}[x^{1}_{i}(t)-\nu(t)]^{2}
$, we have
\begin{align}
&\frac{d}{dt}V_{1}(t)=\sum_{i=1}^{n_{1}}\xi^{1}_{i}[x^{1}_{i}(t)-\nu(t)]
\left\{\sum_{j=1}^{n_{1}}
L^{1,1}_{ij}[x^{1}_{j}(t)-\nu(t)]\right.\nonumber\\
&\left.+\sum_{p=1}^{n_{2}}L^{1,2}_{ip}
[\hat{x}^{2}_{p}(t)-\nu(t)]+\sum_{j=1}^{n_{1}}
L^{1,1}_{ij}[\hat{x}^{1}_{j}(t)-x^{1}_{j}(t)]\right\}\nonumber\\
&=Q^{1}_{1}(t)+Q^{1}_{2}(t)+Q^{1}_{3}(t)\label{dVK-1}
\end{align}
where
\begin{align*}
Q^{1}_{1}(t)=&[x^{1}(t)-\nu(t){\bf 1}]^{\top}\Xi^{1}L^{1,1}[x^{1}(t)-\nu(t){\bf 1}]\\
=&\sum_{i=1}^{n_1}\xi^{1}_{i}D^{1}_{i}[x^{1}_{i}(t)-\nu(t)]^2-\frac{1}{2}\sum_{i=1}^{n_1}\xi^{1}_{i}q^1_i(t)\\
Q_{2}^{1}(t)=&\sum_{i=1}^{n_{1}}\xi^{1}_{i}[x^{1}_{i}(t)-\nu(t)]\sum_{p=1}^{n_{2}}L^{1,2}_{ip}
[\hat{x}^{2}_{p}(t)-\nu(t)]\\
Q^{1}_{3}(t)=&[x^{1}(t)-\nu(t){\bf 1}]^{\top}\Xi^{1}L^{1,1}e^{1}(t)
\end{align*}
By the Cauchy inequality, for any $\kappa_1>0$, we have
\begin{align}
Q^{1}_{2}(t)\le\kappa_1 V_{1}(t)+F_{1}(t)\label{Q2}
\end{align}
where
$F_{1}(t)=\frac{1}{4\kappa_1}\sum_{i=1}^{n_{1}}\xi_{i}^{1}
\left\{\sum_{p=1}^{n_{2}}L^{1,2}_{ip}
[\hat{x}^{2}_{p}(t)-\nu(t)]\right\}^{2}$.
Since $\lim_{t\to\infty}\hat{x}^{2}_{p}(t)=\nu(0)=\nu(t)$, $p=1,\cdots,n_{2}$, hold exponentially,
$\lim_{t\to\infty}F_{1}(t)=0
$
exponentially.

For $Q^{1}_{3}(t)$, we have
\begin{align*}
&Q^{1}_{3}(t)=\sum_{i=1}^{n_1}\sum_{j=1}^{n_1}[x^{1}_{i}(t)-\nu(t)]\xi^{1}_{i}L^{1,1}_{ij}e^{1}_j(t)\\
=&\sum_{i=1}^{n_1}\xi^{1}_{i}D^{1}_{i}e^{1}_i(t)[x^{1}_{i}(t)-\nu(t)]\\
&-\sum_{i=1}^{n_1}\sum_{j\neq i}^{n_1}\xi^{1}_{i}\tilde{L}^{1,1}_{ij}e^{1}_j(t)[x^{1}_{j}(t)-x^{1}_{i}(t)]\\
\le&-\sum_{i=1}^{n_1}\xi^{1}_{i}D^{1}_{i}d^{1}_{i}[x^{1}_{i}(t)-\nu(t)]^2
-\sum_{i=1}^{n_1}\xi^{1}_{i}D^{1}_{i}\frac{1}{4d^{1}_{i}}[e^{1}_i(t)]^2\\
&+\sum_{i=1}^{n_1}\xi^{1}_{i}|\tilde{L}^{1,1}_{ii}|[e^{1}_i(t)]^2
+\sum_{i=1}^{n_1}\xi^{1}_{i}\frac{1}{4}q^{1}_{i}(t)
\end{align*}
with $d^{1}_{i}>0$.
Thus, we have
\begin{corollary}\label{coro2.1r}
Suppose that $\mathcal G$ has spanning tree and $L$ is in the form of (\ref{PF}) with $K=2$.
For $v_{p}\in SCC_{k}$ the trigger time sequence $\{t^{p+n_{k-1}}_{l}\}$ is determined by
\begin{align}
&t^{p+n_{k-1}}_{l+1}=\max\Big\{\tau\ge t^{p+n_{k-1}}_{l}:~|x_{p}^{k}(t)-x_{p}^{k}(t^{p+n_{k-1}}_{l})|\nonumber\\
&\le \sqrt{\frac{\gamma_{p}^{k}}{u_{p}^{k}}[q^{k}_{p}(t)+q^{k,K}_{p}(t)]},~\forall t\in[t^{p+n_{k-1}}_{l},\tau]\Big\}\label{event2.1r}
\end{align}
where $\gamma_{p}^{k}\in(0,1)$, $u_{p}^{k}=4|L_{pp}^{k,k}|-\frac{D_{p}^{k}}{d_{p}^{k}}$ and $0<d_{p}^{k}<\frac{1}{2}$.
Then, system (\ref{mg}) reaches consensus exponentially; In addition, $\lim_{t\to\infty}x_{i}(t)=\sum_{p=1}^{n_{2}}\xi^{2}_{p}x^{2}_{p}(0)$.
\end{corollary}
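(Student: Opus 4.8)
The plan is to exploit the cascade structure of the reducible topology (\ref{PF}): first certify that the ``leader'' component $SCC_2$ reaches consensus autonomously, then drive the ``follower'' component $SCC_1$ toward that limit through the Lyapunov decomposition of $\frac{d}{dt}V_1(t)$ already displayed in (\ref{dVK-1}).

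First I would dispose of $SCC_2$. Since $L$ is block upper-triangular in (\ref{PF}), the states $x^2(t)$ evolve autonomously under the irreducible Laplacian $L^{2,2}$, and for $k=K=2$ the quantities $q^{2,2}_p$ and $D^2_p$ vanish, so $u^2_p=4|L^{2,2}_{pp}|$ and the rule (\ref{event2.1r}) collapses to exactly the rule (\ref{event1.1c}) of Theorem \ref{coro1.1}. Hence Theorem \ref{coro1.1} applies verbatim to $SCC_2$ (inheriting its standing non-Zeno hypothesis $\lim_k t^i_k=+\infty$): $x^2_p(t)$ and $\hat{x}^2_p(t)$ converge exponentially to $\nu(0)$, and since $(\xi^2)^{\top}L^{2,2}=0$ one has $\dot\nu(t)=0$, i.e. $\nu(t)\equiv\nu(0)$. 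This is what will make the forcing terms below decay exponentially.

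Next I would estimate $\frac{d}{dt}V_1=Q^1_1+Q^1_2+Q^1_3$. The idea is to substitute the squared trigger inequality $[e^1_i(t)]^2\le \frac{\gamma^1_i}{u^1_i}[q^1_i(t)+q^{1,2}_i(t)]$ from (\ref{event2.1r}) into the displayed bound for $Q^1_3$. Using $L^{1,1}_{ii}=\tilde L^{1,1}_{ii}+D^1_i$ and $D^1_i\le 0$ one checks that the coefficient of $[e^1_i]^2$ in $Q^1_3$ is at most $u^1_i/4$, so the trigger bound turns that contribution into $\frac14\sum_i\xi^1_i\gamma^1_i[q^1_i+q^{1,2}_i]$; combined with the $-\frac12\sum_i\xi^1_i q^1_i$ already inside $Q^1_1$ this leaves $-\frac14\sum_i\xi^1_i(1-\gamma^1_i)q^1_i$ plus a residual $\frac14\sum_i\xi^1_i\gamma^1_i q^{1,2}_i$. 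Writing $x^2_j-x^1_i=(x^2_j-\nu)-(x^1_i-\nu)$ and using $\sum_j L^{1,2}_{ij}=|D^1_i|$, the residual splits into an exponentially decaying part built from $(x^2_j-\nu)^2$ and a term $\frac12\sum_i\xi^1_i\gamma^1_i|D^1_i|(x^1_i-\nu)^2$. Because $0<d^1_i<\frac12$ and $\gamma^1_i<1$, the surviving $(x^1_i-\nu)^2$ coefficient $|D^1_i|[(1-d^1_i)-\frac12\gamma^1_i]$ stays strictly positive on the boundary agents, so after collecting terms $Q^1_1+Q^1_3$ is bounded above by $\sigma\,[x^1-\nu{\bf 1}]^{\top}Q_1[x^1-\nu{\bf 1}]$ for a suitable $\sigma>0$, plus an exponentially small forcing.

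The decisive step is Property \ref{p1}: since $Q_1$ is negative definite for $k<K$, the last display is at most $-\sigma\lambda_{\min}(-Q_1)\|x^1-\nu{\bf 1}\|^2$, which controls \emph{all} directions of $x^1-\nu{\bf 1}$, not only the boundary ones. This is what lets me absorb the cross-coupling: bounding $Q^1_2\le\kappa_1 V_1+F_1(t)$ as in (\ref{Q2}) with $F_1(t)\to 0$ exponentially, and choosing $\kappa_1$ smaller than $2\sigma\lambda_{\min}(-Q_1)/\max_i\xi^1_i$, yields $\frac{d}{dt}V_1(t)\le -c_1 V_1(t)+h(t)$ with $c_1>0$ and $h(t)$ exponentially decaying. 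A Gr\"onwall comparison identical to the estimate (\ref{Ve}) then gives $V_1(t)\to 0$ exponentially, hence $x^1_i(t)\to\nu(0)=\sum_{p=1}^{n_2}\xi^2_p x^2_p(0)$, and the case $K>2$ follows by applying the same argument recursively down the Perron--Frobenius chain. I expect the main obstacle to be the bookkeeping of the previous paragraph: one must keep the $q^1_i$-dissipation and the boundary $(x^1_i-\nu)^2$-dissipation in the correct proportions so the residual remains a negative-definite multiple of $Q_1$, since only through Property \ref{p1} does the purely boundary dissipation upgrade to full control of $V_1$ and thereby dominate the global coupling term $\kappa_1 V_1$.
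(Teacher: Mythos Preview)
Your proposal is correct and follows essentially the same route as the paper: treat $SCC_2$ by Theorem \ref{coro1.1}, substitute the trigger bound into the displayed estimate for $Q^1_3$, split the residual $q^{1,2}_i$ via $x^2_j-x^1_i=(x^2_j-\nu)-(x^1_i-\nu)$, absorb the $(x^1_i-\nu)^2$ part into a positive multiple of $Q^1_1$ (the paper obtains exactly $\frac12(1-\tilde\gamma_1)Q^1_1$), invoke Property \ref{p1} to turn this into $-cV_1$, absorb $\kappa_1V_1$, and finish with Gr\"onwall. The only cosmetic difference is that the paper uses a sharper Young split of $[x^2_j-x^1_i]^2$ (yielding the constant $\frac14+\frac{1}{16(\frac34-d^1_i)}$ in front of the decaying forcing) where you use the cruder $2$--$2$ split; either works.
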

\begin{proof} We only need to discuss the components $v_{p}\in SCC_{1}$. From (\ref{event2.1r}) and (\ref{dVK-1}), we have
\begin{align*}
&\frac{d}{dt}V_{1}(t)
\le \sum_{i=1}^{n_1}(1-d^{1}_{i})\xi^{1}_{i}D^{1}_{i}[x^{1}_{i}(t)-\nu(t)]^2\\
&+\sum_{i=1}^{n_1}\gamma^{1}_{i}\xi^{1}_{i}\frac{1}{4}
\sum_{j=1}^{n_2}L^{1,2}_{ij}[x^{2}_{j}(t)-x^{1}_{i}(t)]^2\\
&-\frac{1}{4}\sum_{i=1}^{n_1}(1-\gamma^{1}_{i})\xi^{1}_{i}q^1_i(t)+\kappa_1 V_{1}(t)+F_{1}(t)\\
\le&\frac{1}{2}(1-\tilde{\gamma}_1)Q^{1}_{1}(t)+\kappa_1 V_{1}(t)+F_{1}(t)\\
&+\sum_{i=1}^{n_1}\gamma^{1}_{i}\xi^{1}_{i}[\frac{1}{4}+\frac{1}{16(\frac{3}{4}-d^{1}_{i})}]
\sum_{j=1}^{n_2}L^{1,2}_{ij}[x^{2}_{j}(t)-\nu(t)]^2\\
\le&-(1-\tilde{\gamma}_1)\frac{\rho(-Q_1)}{\rho_2(\Xi^1)}V_{1}(t)+\kappa_1 V_{1}(t)+F_{2}(t)
\end{align*}
where $F_{2}(t)=\sum_{i=1}^{n_1}\gamma^{1}_{i}\xi^{1}_{i}[\frac{1}{4}+\frac{1}{16(\frac{3}{4}-d^{1}_{i})}]
\sum_{j=1}^{n_2}L^{1,2}_{ij}[x^{2}_{j}(t)-\nu(t)]^2+F_{1}(t)$ and $\tilde{\gamma}_1=max\{\gamma^{1}_{1},\cdots,\gamma^{1}_{n_1}\}$. Picking sufficiently small $\kappa_1$, there exists some $\tilde{\kappa}_1>0$, such that
$
\frac{d}{dt}V_{1}(t)
\le-\tilde{\kappa}_1V_{1}(t)+F_{2}(t)
$, which implies
$
V_{1}(t)
\le e^{-\tilde{\kappa}_1t}\bigg\{V_{1}(0)+\int_{0}^{t}e^{\tilde{\kappa}_1s}F_{2}(s)ds\bigg\}
$. Noting $\lim_{t\to\infty}F_{2}(t)=0$ exponentially, since $\lim_{t\to\infty}F_{1}(t)=0$ and $\lim_{t\to\infty}\|x^2(t)-\nu(t){\bf 1}\|=0$ exponentially. Thus $\lim_{t\to\infty}V_{1}(t)=0$ exponentially.
\end{proof}

Letting $d_{p}^{k}=\frac{1}{4}$ in Corollary \ref{coro2.1r}, we can extend Theorems \ref{coro1.1}, Theorems \ref{coro1.1e} and \ref{thmabc} to the case that the graph has a spanning tree. Hence, we summarize the results as follows.
\begin{theorem}\label{coro1.1rr}
Suppose that $\mathcal G$ has spanning tree and $L$ is in the form of (\ref{PF}). For agent $v_i$, by the trigger rule as described in Theorem \ref{coro1.1}, assuming $\lim_{k\to\infty}t^{i}_{k}=+\infty$ for all $i=\onetom$, system (\ref{mg}) reaches consensus exponentially and $t^{i}_{k+1}>t^{i}_{k}$ holds whenever $x^{i}(t^{i}_{k})\ne\bar{x}(0)$; by the trigger rules as described in Theorems \ref{coro1.1e} and \ref{thmabc}, system (\ref{mg}) reaches consensus exponentially and the Zeno behaviour can be excluded. In addition, in all cases,
$\lim_{t\to\infty}x_{i}(t)=\sum_{p=1}^{n_{K}}\xi^{K}_{p}x^{K}_{p}(0)$.
\end{theorem}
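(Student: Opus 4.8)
The plan is to prove the statement by backward induction on the SCC index, reducing the general spanning-tree case to repeated application of the strongly-connected results of Theorems \ref{coro1.1}, \ref{coro1.1e} and \ref{thmabc}, glued together by a vanishing-perturbation argument of exactly the kind already carried out for $K=2$ in Corollary \ref{coro2.1r}. Throughout I would fix the target scalar $c^{\ast}=\sum_{p=1}^{n_K}\xi^K_p x^K_p(0)$ once and for all and show that \emph{every} SCC converges to this same value.

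The \textbf{base case} is $SCC_K$. Because $L$ is in the form (\ref{PF}), the last row-block is autonomous, with $\dot x^K=L^{K,K}\hat x^K$ and $L^{K,K}$ itself an irreducible Laplacian (or one-dimensional); the cross terms $q^{K,K}_i$ vanish, so the reducible rule (\ref{event2.1r}) collapses to the single-SCC rules. Thus whichever of the three principles is in force, the relevant theorem applies verbatim and gives $x^K(t)\to c^{\ast}\mathbf 1$ exponentially (and Zeno exclusion for rules \ref{coro1.1e}/\ref{thmabc}; for rule \ref{coro1.1} we invoke the standing hypothesis $\lim_{k\to\infty}t^i_k=+\infty$).

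For the \textbf{inductive step} assume $x^p(t)\to c^{\ast}\mathbf 1$ exponentially for every $p>k$. Following (\ref{dVK-1}) I would take $V_k(t)=\frac12\sum_{i=1}^{n_k}\xi^k_i[x^k_i(t)-c^{\ast}]^2$, now centred at the fixed $c^{\ast}$ rather than a time-varying mean since $SCC_k$ is not autonomous for $k<K$, and split $\frac{d}{dt}V_k$ into an internal term, a cross-coupling term and a triggering-error term. The internal term equals $[x^k-c^{\ast}\mathbf 1]^{\top}Q_k[x^k-c^{\ast}\mathbf 1]$, which Property \ref{p1} guarantees is strictly negative definite for $k<K$ and hence supplies a genuine decay rate $-(\rho(-Q_k)/\rho_2(\Xi^k))V_k$. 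The cross-coupling sum over $p>k$ of $L^{k,p}[\hat x^p-c^{\ast}\mathbf 1]$ is bounded, via the Cauchy inequality as in (\ref{Q2}), by $\kappa_k V_k(t)+F_k(t)$, where $F_k(t)$ is quadratic in the deviations $\hat x^p_j(t)-c^{\ast}$ and therefore decays exponentially by the induction hypothesis. The triggering-error term is absorbed exactly as in the strongly-connected proofs: rule \ref{coro1.1}/\ref{thmabc} dominates $|L^{k,k}_{pp}|(e^k_p)^2$ by a $\gamma$-fraction of $q^k_p+q^{k,K}_p$ (this is precisely where $u^k_p=4|L^{k,k}_{pp}|-D^k_p/d^k_p$ with $d^k_p=1/4$ enters), whereas rule \ref{coro1.1e} contributes only an additional $\phi^2 e^{-2\alpha t}$ forcing that again decays exponentially. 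Choosing $\kappa_k$ small enough leaves a strictly negative coefficient, so $\frac{d}{dt}V_k\le-\tilde\kappa_k V_k+F_k(t)$ with $\tilde\kappa_k>0$ and $F_k\to0$ exponentially; Grönwall's inequality then forces $V_k(t)\to0$ exponentially, closing the induction. Zeno exclusion for rules \ref{coro1.1e}/\ref{thmabc} is inherited from the single-SCC arguments, since exponential convergence yields a uniform bound $M$ on $|\sum_j L_{ij}\hat x_j|$ and the inter-event lower bounds of Theorems \ref{coro1.1e} and \ref{thmabc} depend only on $M,\phi_i,\alpha_i,\gamma_i$ and the local $|L_{ii}|$, none of which is affected by reducibility.

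The \textbf{main obstacle} is the inductive step, specifically ensuring that the cross-coupling forcing $F_k(t)$ decays \emph{exponentially} rather than merely asymptotically, so that the Grönwall integral $\int_0^t e^{\tilde\kappa_k s}F_k(s)\,ds$ remains controlled and $V_k$ inherits an exponential rate. This is exactly why the induction hypothesis must assert exponential convergence of all higher SCCs to the common value $c^{\ast}$, and why Property \ref{p1} (strict negative definiteness of $Q_k$ for $k<K$) is indispensable: it is the only source of a decay margin $\tilde\kappa_k$ large enough to dominate the residual coupling $\kappa_k V_k$.
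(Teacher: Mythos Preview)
Your proposal is correct and follows essentially the same approach as the paper. The paper proves the $K=2$ case in detail (Corollary~\ref{coro2.1r}) and then asserts that the general case ``can be treated in the same way''; your backward induction on the SCC index is precisely the natural formalisation of that assertion, using the same Lyapunov function $V_k$, the same three-term decomposition~(\ref{dVK-1}), Property~\ref{p1} for the strict decay margin when $k<K$, and the Gr\"onwall argument with exponentially vanishing forcing $F_k$. Your choice to centre $V_k$ at the fixed scalar $c^\ast$ rather than the paper's $\nu(t)$ is equivalent, since $\nu(t)\equiv\nu(0)=c^\ast$ by $\xi^{K\top}L^{K,K}=0$.
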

It can be seen that Theorem \ref{coro1.1}, Theorem \ref{coro1.1e} and Theorem \ref{thmabc} are special cases of Theorem \ref{coro1.1rr}.

\section{Distributed self-triggered principles}
In the previous results, the continuous monitoring system states is required, which may cause cost communication load. An alternativeis to predict the next trigger time based on the states at the the agents' latest trigger time directly. 
This sort of triggered strategies are said to be {\em self-triggered principle}.

For any $p\in\mathcal I$,  $x_{p}(t)$ can be rewritten as 
\begin{eqnarray}
x_{p}(t)=x_{p}(t^{*}_{k_{p}(t)})
+(t-t^{*}_{k_{p}(t)})\sum_{j=1}^{m}L_{pj}x_j(t^{j}_{k_{j}(t)})\label{xpin}
\end{eqnarray}
for all $t>t^{*}_{k_{p}(t)}$ with
\begin{align}
t^{*}_{k_{p}(t)}=\max_{v_j\in  N^{in}_{p}}t^{j}_{k_{j}(t)}\label{txx}
\end{align}
the latest time of the triggers of all its in-neighbours agents.
For agent $v_{p}$, in order to determine the in-neighbour agent $v_i$'s state at time $t$ by equation (\ref{xpin}) replacing $p$ by $i$, the information of $t^{*}_{k_{i}(t)}$, $x_{i}(t^{*}_{k_{i}(t)})$ and $\sum_{j=1}^{m}L_{ij}x_j(t^{j}_{k_{j}(t)})$ from agent $v_i$ are required.

At any fixed time $s$, given $t^{p}_1,\cdots,t^{p}_l=t^{p}_{k_{p}(s)}$, Theorem \ref{coro1.1rr} implies that solving the following maximization problems
\begin{align}
&t_{l+1}^{p}=\max\Big\{\tau\ge s:~|x_{p}(r)-x_{p}(t^{p}_l)|\nonumber\\
&\le \sqrt{\frac{\gamma_{i}}{4|L_{ii}|}q_{p}(t^{*}_{k_{p}(s)})},\forall r\in[t^{*}_{k_{p}(s)},\tau]\Big\},\label{event5.1}
\end{align}
\begin{align}
\tau^{p}_{l+1}=\max\Big\{&\tau\ge s:
~|x_{p}(r)-x_{p}(t^{p}_l)|\nonumber\\&\le\delta_p(r),~\forall s\in[t^{*}_{k_{p}(s)},\tau]\Big\},
\label{event5.1e}
\end{align}
and
\begin{align}
&\tau^{p}_{l+1}=\max\Big\{\tau\ge s:
~|x_{p}(r)-x_{p}(t^{p}_l)|\nonumber\\&\le\sqrt{\frac{\gamma_p}{4|L_{pp}|}\hat{q}_{p}(t^{*}_{k_{p}(s)})},~\forall r\in[t^{*}_{k_{p}(s)},\tau]\Big\}
\label{event5.1rr}
\end{align}
can predict the next trigger time $t^{p}_{l+1}$.
Thus, the following algorithms are proposed.

{\em Distributed self-triggered principle $1$:} For agent $v_p$, pick $0<\gamma_p<1$, $\phi_p>0$ and $\alpha_p>0$.
\begin{enumerate}
\item At time $s=t^{p}_{l}$, search $\tau^{p}_{l+1}$ by the rule (\ref{event5.1}) or (\ref{event5.1e});
\item In case that none of agent $v_{p}$'s in-neighbours sends information to agent $v_{p}$ during $(t^{*}_{k_{p}(s)},\tau^{p}_{l+1})$, then $v_{p}$ triggers at time $t^{p}_{l+1}=\tau^{p}_{l+1}$;
\item In case that there exists $v_{p}$'s in-neighbour agents sending information at time $t\in (t^{*}_{k_{p}(s)},\tau^{p}_{l+1})$, then update time $s=t$ and $t^{*}_{k_{p}(s)}$ in (\ref{txx}) and go to step (1);
\end{enumerate}

{\em Distributed self-triggered principle $2$:} For agent $v_p$, pick $\gamma_{p}\in(0,1)$  and $0<\varepsilon_p<\frac{\sqrt{\gamma_p}}{2|L_{pp}|}$.
\begin{enumerate}
\item At time $s=t^{p}_{l}$, check whether $\sum_{i=1}^{m}L_{pj}\hat{x}_{j}(s)=0$ or not;
\item If true, then agent $v_p$ does not trigger and its state remains constant until one of its in-neighbors triggers at time $t>t^{*}_{k_{p}(s)}$, then update time $s=t$ and $t^{*}_{k_{p}(s)}$ in (\ref{txx}) and go to step (1);
\item If not true,then  at time $s$, search $\tau^{p}_{l+1}$ by (\ref{event5.1rr});
\item In case that there is no trigger occur in all $v_{p}$'s in-neighbours during $(t^{*}_{k_{p}(s)},\tau^{p}_{l+1})$, then $v_{p}$ triggers at time $t^{p}_{l+1}=\tau^{p}_{l+1}$;
\item In case that there exists one in-neighbour agent of $v_{p}$ triggers at time $t\in (t^{*}_{k_{p}(s)},\tau^{p}_{l+1})$: if $t-t^{p}_{l}\ge\varepsilon_p$ and $g_{p}(e_{p}(t))<0$, then then update time $s=t$ and $t^{*}_{k_{p}(s)}$, and go to step (1); otherwise $v_{p}$ triggers at time $t^{p}_{l+1}=t$.
\end{enumerate}

\begin{theorem}\label{thm5.1}
Suppose that $\mathcal G$ has spanning tree and $L$ is in the form of (\ref{PF}). Under the distributed self-triggered principle $1$ and $2$, system (\ref{mg}) reaches consensus exponentially and the Zeno behaviour can be excluded for principle $1$ with the prediction rule (\ref{event5.1e}) and principle $2$. In addition, in all cases,
$\lim_{t\to\infty}x_{i}(t)=\sum_{p=1}^{n_{K}}\xi^{K}_{p}x^{K}_{p}(0)$.
\end{theorem}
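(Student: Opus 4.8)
The plan is to reduce each self-triggered principle to the event-triggered principle it imitates, so that the exponential-consensus and Zeno conclusions already proved in Theorem \ref{coro1.1rr} can be invoked. The enabling observation is that the closed loop (\ref{mg}) is piecewise affine: on any interval during which none of the in-neighbours of $v_p$ updates, the input $u_p=\sum_{j=1}^{m}L_{pj}\hat{x}_j$ is constant, so by (\ref{xpin}) the trajectory $x_p(\cdot)$ is affine in $t$; and because $v_p$ can reconstruct each in-neighbour trajectory from the triple $(t^{*}_{k_i(t)},x_i(t^{*}_{k_i(t)}),\sum_j L_{ij}\hat{x}_j)$, every $x_j(\cdot)$ with $v_j\in N^{in}_p$ is affine as well. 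Hence $e_p$, $q_p$ and $\hat{q}_p$ are explicit functions of $t$ on such an interval, the maximization problems (\ref{event5.1}), (\ref{event5.1e}), (\ref{event5.1rr}) are well posed with closed-form solutions, and $\tau^{p}_{l+1}$ is exactly the first time the associated event condition would be violated. The re-prediction branches (step (3) of principle $1$, step (5) of principle $2$) refresh the reconstruction precisely when it would otherwise become invalid, i.e. at in-neighbour triggers.

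First I would settle principle $2$ and the variant of principle $1$ using (\ref{event5.1e}), where the reduction is exact. In (\ref{event5.1e}) the threshold $\delta_p(r)=\phi_p e^{-\alpha_p r}$ is a state-independent clock, so the self-triggered condition is literally that of Theorem \ref{coro1.1e}. In (\ref{event5.1rr}) the quantity $\hat{q}_p$ depends only on the sampled states $\hat{x}$ and is therefore constant between in-neighbour triggers, so $\hat{q}_p(t^{*}_{k_p(s)})=\hat{q}_p(t)$ on the whole prediction window and the condition coincides with (\ref{event1.1cr}). In both cases the generated trigger sequences are admissible for the event-triggered schemes, so the Lyapunov inequalities (\ref{Ve}) and (\ref{dV1.2}), together with their reducible-topology extension in Theorem \ref{coro1.1rr}, hold verbatim; exponential consensus with $\lim_{t\to\infty}x_i(t)=\sum_{p=1}^{n_K}\xi^{K}_p x^{K}_p(0)$ then follows from the invariance of the weighted average on $SCC_K$ and the cascade argument of Corollary \ref{coro2.1r}.

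For Zeno exclusion I would transcribe the two-stage argument of Theorem \ref{thmabc}. When no in-neighbour information reaches $v_p$ after $t^{p}_{l}$ the reconstruction is exact and $e_p(t)=-(t-t^{p}_{l})\sum_j L_{pj}\hat{x}_j(t^{p}_{l})$, so the computation (\ref{ietl}) yields the inter-event lower bound $\sqrt{\gamma_p}/(2|L_{pp}|)$; when information does arrive, the guard $t-t^{p}_{l}\ge\varepsilon_p$ in step (5) of principle $2$ plays the role of the $\varepsilon_i$-test, and the interval-counting argument over windows of length $\tfrac12\min_p\varepsilon_p$ carries over unchanged, bounding the number of triggers of $v_p$ in any such window by $m$. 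The clock estimate (\ref{event1.1esss}) handles the (\ref{event5.1e}) variant of principle $1$ identically.

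The main obstacle is the consensus proof for principle $1$ under (\ref{event5.1}), since there the threshold freezes $q_p$ at the reference time $t^{*}_{k_p(s)}$ whereas the event condition (\ref{event1.1c}) uses the running value $q_p(t)$. Unlike $\hat{q}_p$, the quantity $q_p(t)$ varies between in-neighbour triggers, so the two conditions no longer coincide and (\ref{dV1.1}) does not close by mere substitution. I would keep $q_p(t^{*}_{k_p(s)})$ and $q_p(t)$ as separate terms in $\tfrac{d}{dt}V$ and control their gap on each prediction window: there $x_p(\cdot)$ and the in-neighbour trajectories are affine with slopes bounded by the uniformly bounded inputs, so $|q_p(t)-q_p(t^{*}_{k_p(s)})|$ is controlled by the window length, itself bounded through the trigger rule, and combined with the exponential decay of the states' spread this preserves a strict Lyapunov decrease. (This case also inherits the standing hypothesis $\lim_{k}t^{p}_{k}=+\infty$ of Theorem \ref{coro1.1rr} and, like Theorem \ref{coro1.1}, does not claim Zeno exclusion.) This discrepancy bound is the step requiring the most care; the remaining conclusions then follow as above.
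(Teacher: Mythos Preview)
Your overall strategy---reduce each self-triggered rule to its event-triggered counterpart and invoke Theorem \ref{coro1.1rr}---is exactly what the paper does; its entire proof is the single sentence ``It can be seen that this theorem is a direct consequence from Theorem \ref{coro1.1rr}.'' For principle $2$ and for principle $1$ with (\ref{event5.1e}) your reduction is literal (since $\hat{q}_p$ is piecewise constant between in-neighbour triggers and $\delta_p$ is state-free), and your transcription of the Zeno arguments from (\ref{ietl}) and (\ref{event1.1esss}) matches the paper's implicit reasoning.

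Where you go beyond the paper is in flagging principle $1$ under (\ref{event5.1}). You are right that $q_p(t^{*}_{k_p(s)})$ is frozen while $q_p(t)$ in (\ref{event1.1c}) is not, so the self-triggered sequence need not coincide with the event-triggered one and the substitution into (\ref{dV1.1}) does not close automatically. The paper does not address this; it simply asserts the reduction. Your proposed repair---bounding $|q_p(t)-q_p(t^{*}_{k_p(s)})|$ on each prediction window via the affine dynamics---is plausible but, as you hint, needs care to avoid circularity, since you invoke exponential decay of the spread while still establishing it. A cleaner reading is that the paper either intends $q_p(r)$ in place of $q_p(t^{*}_{k_p(s)})$ in (\ref{event5.1}) (which is computable from (\ref{xpin}) and would make the reduction exact), or simply leans on the standing hypothesis $\lim_k t^p_k=+\infty$ inherited from Theorem \ref{coro1.1} and leaves the discrepancy unexamined. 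Either way, your treatment is at least as rigorous as the paper's on this point.
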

It can be seen that this theorem is a direct consequence from Theorem \ref{coro1.1rr}. By these principles,  the continuous monitoring of the system states is avoided.

\section{Examples}
In this section, a numerical example is given to demonstrate the present results.
Consider a network of four agents with the Laplacian matrix
\begin{eqnarray*}
L=\left[\begin{array}{rrrr}-7&3&0&4\\
1&-3&0&2\\
0&2&-7&5\\
0&0&4&-4
\end{array}\right]
\end{eqnarray*}
which corresponds a directed strongly connected weighted network. The initial value of each agent is randomly selected within the interval $[-5,5]$ in the simulation. Here, for this time, we take the initial values as $[2.5320,4.7160,-4.1310, 1.2830]^{\top}$. Figure \ref{fig:2} shows the evolution of the Lyapunov function $V(t)$ under the self-triggered principles provided in (\ref{event5.1}) with $\gamma_i=0.9$, in (\ref{event5.1e}) with $\phi_i=7$ and $\alpha_i=4>\lambda_2/(2\mu_m)=0.4035$, and in (\ref{event5.1rr}) with $\gamma_i=0.9$ and $\varepsilon_i=\sqrt{\gamma_i}/(4|L_{ii}|)$, for all $i\in\mathcal{I}$, in comparison to the original continuous feedback consensus protocol: $\dot{x}(t)=Lx(t)$. It can be seen that under all self-triggering principles, $V(t)$ approaches 0 exponentially, i.e., each agent reaches the consensus value $\bar{x}(0)=0.4814$ exponentially. Moreover, from Figure \ref{fig:2}, one can see that under (\ref{event5.1rr}) and (\ref{event5.1}) $V(t)$ converges at least as fast as under continuous feedback consensus protocol. Figure \ref{fig:3} illustrates the trigger times of each agents. It is shown that the inter-event times of each agent under the self-triggered principles by (\ref{event5.1e}) and (\ref{event5.1rr}) are lower-bounded by some positive constants. There are clearly less triggers in these two principles than that under the self-triggered principle provided in (\ref{event5.1}).


\begin{figure}[hbt]
\centering
\includegraphics[width=0.45\textwidth]{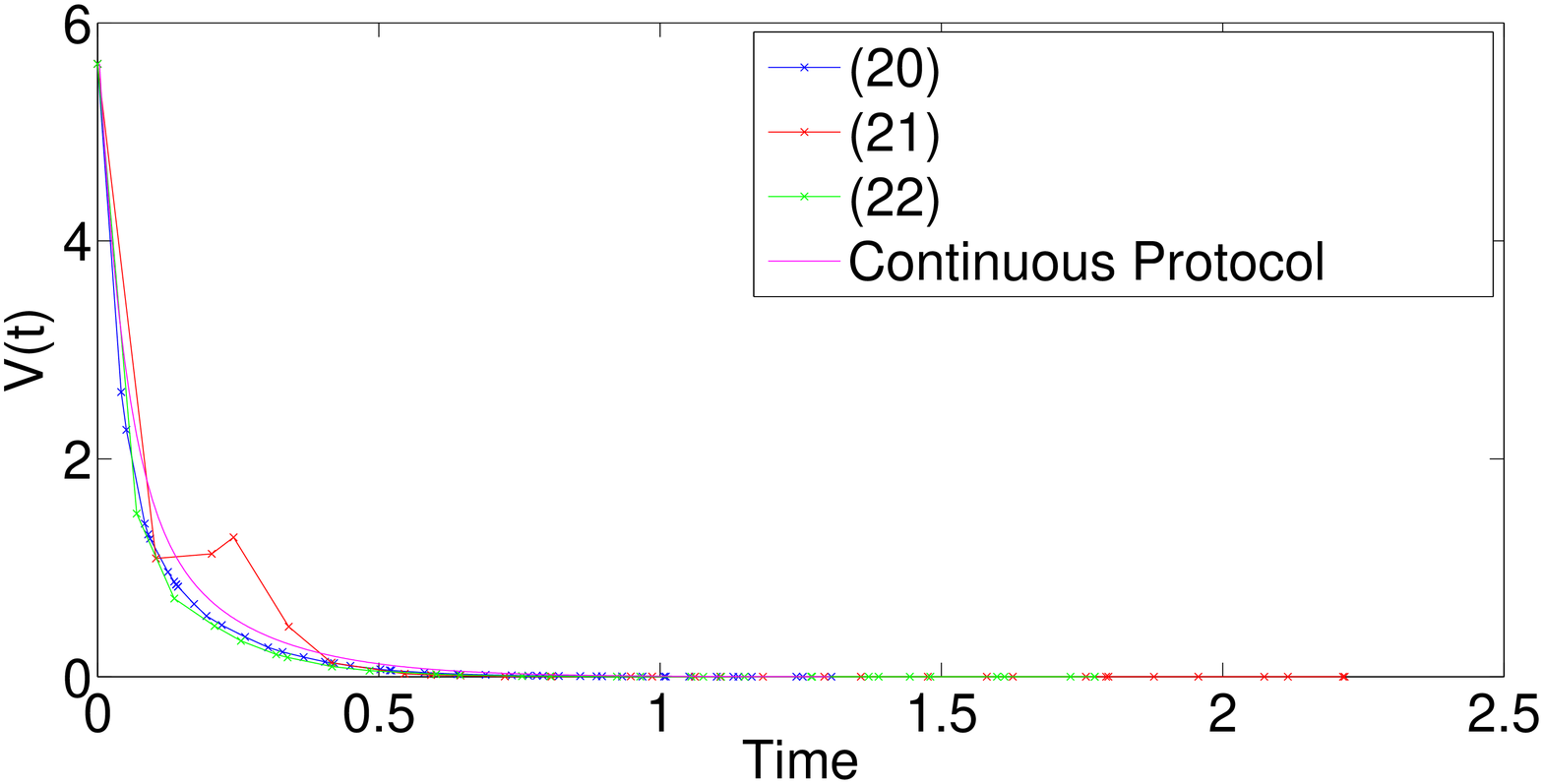}
\caption{Dynamics of $V(t)$ under distributed self-triggered principles by the prediction rules (\ref{event5.1}), (\ref{event5.1e}) and (\ref{event5.1rr}), compared with the continuous protocol $\dot{x}=Lx$. }
\label{fig:2}
\end{figure}

\begin{figure}[hbt]
\centering
\includegraphics[width=0.45\textwidth,height=0.3\textwidth]{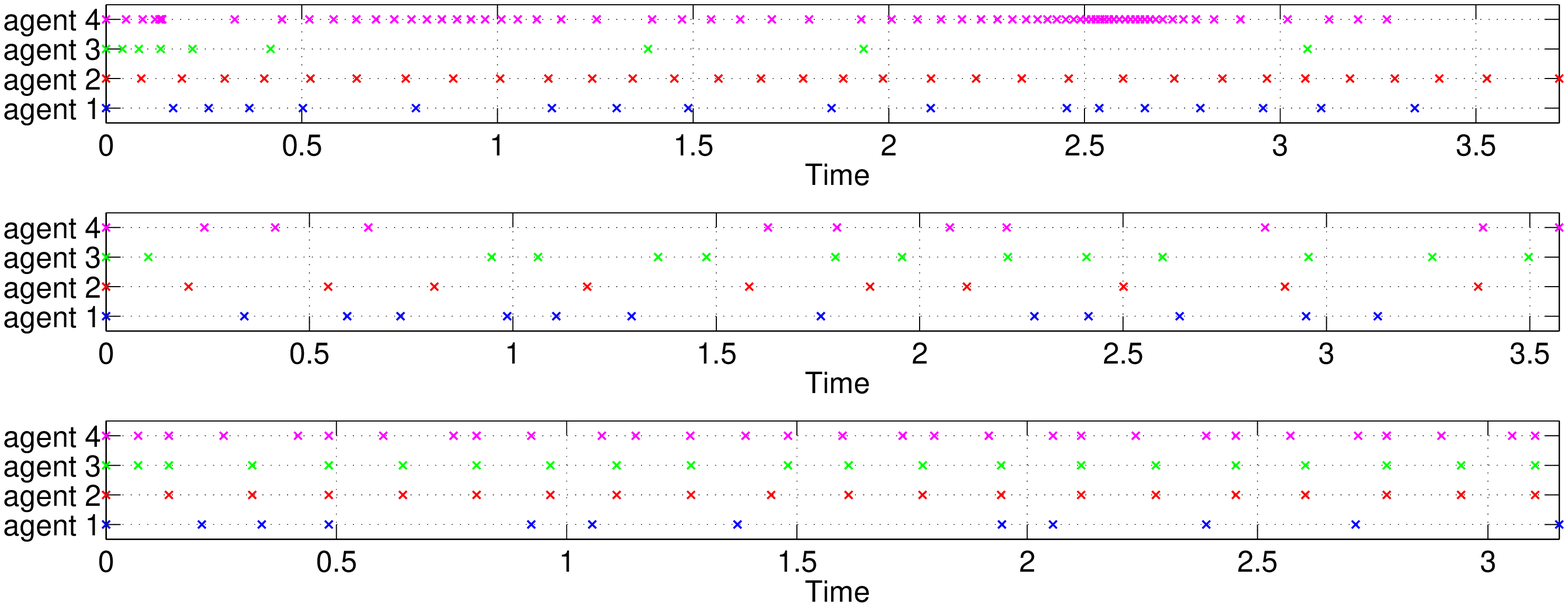}
\caption{The trigger times of each agents under distributed self-triggered principles by the prediction rules (\ref{event5.1}), (\ref{event5.1e}) and (\ref{event5.1rr}).}
\label{fig:3}
\end{figure}

\section{Conclusion}
In conclusion, we presented event-triggered and self-triggered principles in distributed
formulation for multi-agent systems with directed and possibly reducible topologies. We firstly  considered the case of directed strongly connected graph and then extended to reducible graph. We prove that if the underlying graph has a spanning tree, then the multi-agent reaches consensus exponentially by these principles, and Zeno behavior can be excluded in the cases. Then, we proposed self-triggered principles, which predicts the next triggering time instead of continuous monitoring the system states. The effectiveness the theoretical results is verified by the numerical example.



\begin{thebibliography}{99}     


\bibitem[Anta, \& Tabuada (2008)]{Aapt}
Anta, A., \& Tabuada, P. (2008).
\newblock Self-triggered stabilization of homogeneous control systems.
\newblock \emph{In Proc. Amer. Control Conf., ACC}, Seattle, Washington, USA (pp. 4129--4134).

\bibitem[Anta, \& Tabuada (2010)]{Aap}
Anta, A., \& Tabuada, P. (2010).
\newblock To sample or not to sample: self-triggered control for nonlinear systems.
\newblock \emph{IEEE Trans. Autom. Control}, 55(9),\penalty0 2030--2042.

\bibitem[Lu, \& Chen (2006)]{LC2006}
Lu, W., \&  Chen, T,. (2006).
\newblock New Approach to Synchronization
Analysis of Linearly Coupled Ordinary Differential Systems,
\newblock Physica D, 213,  214-230


\bibitem[Diestel (2005)]{Die}
Diestel, R. (2005).
\newblock  \emph{Graph theory}.
\newblock Graduate texts in mathematics 173, Springer-Verlag Heidelberg, New York.

\bibitem[Dimarogonas, et al. (2012)]{Dvd}
Dimarogonas, D. V., Frazzoli, E.. \& Johansson, K. H. (2012).
\newblock Distributed event-triggered control for multi-agent systems.
\newblock \emph{IEEE Trans. Autom. Control}, 57(5),\penalty0 1291--1297.


\bibitem[Fan,  et al. (2013)]{Yfg}
Fan, Y., Feng, G., Wang, Y., \& Song, C. (2013).
\newblock Distributed event-triggered control of multi-agent systems with combinational measurements.
\newblock \emph{Automatica}, 49,\penalty0 671--675.


\bibitem[Garcia, et al. (2013)]{Gar}
Garcia, E., et al. (2013).
\newblock Decentralised event-triggered cooperative control with limited communication.
\newblock \emph{International Journal of Control}, 86(9),\penalty0 1479--1488.

\bibitem[Lu, \& Chen (2004a)]{LC2004a}
Lu, W., \&  Chen, T,. (2004a).
\newblock Synchronization  of Coupled Connected
Neural Networks With Delays,
\newblock IEEE Transactions on Circuits and
Systems-I, Regular Papers, 51(12),  2491-2503


\bibitem[Horn, \& Johnson (1987)]{Rah}
Horn, R. A., \& Johnson, C.R. (1987).
\newblock  \emph{Matrix Analysis}.
\newblock Cambridge, U.K.: Cambridge Univ. Press.

\bibitem[Johansson, et al. (1999)]{Joh}
Johansson, K. H., Egerstedt, M., Lygeros, J. \& Sastry, S. S. (1999).
\newblock On the regularization of zeno hybrid automata.
\newblock \emph{Systems and Control Letters}, 38,\penalty0 141--150.

\bibitem[Liu, et al. (2011)]{Liubo}
Liu, B., Lu, W. L., \& Chen, T. P. (2011).
\newblock Consensus in networks of multiagents with switching topologies modeled as adapted stochastic processes.
\newblock \emph{SIAM J. Control optim.}, 49(1),\penalty0 227--253.

\bibitem[Lu, \& Chen (2004)]{LC2004}
Lu, W., \&  Chen, T,. (2004).
\newblock Synchronization Analysis of Linearly Coupled Networks
of Discrete Time Systems,
\newblock Physica D 198 148-168

\bibitem[Lu, \& Chen (2007)]{LC2007}
Lu, W., \&  Chen, T,. (2007).
\newblock Global Synchronization of Discrete-Time
Dynamical Network With a Directed Graph,
\newblock IEEE Transactions on Circuits and
Systems-II: Express Briefs, 54(2),  136-140

\bibitem[Liu, et al. (2012)]{Zliu}
Liu, Z., Chen, Z., \& Yuan, Z. (2012).
\newblock Event-triggered average-consensus of multi-agent systems with weighted and direct topology.
\newblock \emph{Journal of Systems Science and Complexity}, 25(5),\penalty0 845--855.


\bibitem[Mazo, et al. (2010)]{Mjm}
Mazo, M., Anta, A., \& Tabuada, P. (2010).
\newblock An ISS self-triggered implementation of linear controllers.
\newblock \emph{Automatica}, 46(8),\penalty0 1310--1314.


\bibitem[Mazo, \& Cao (2011)]{Mmc}
Mazo, M. \& Cao, M. (2011).
\newblock Decentralized event-triggered control with asynchronous updates.
\newblock \emph{In Proc. 50th IEEE Conf. on Decision and Control and European Control Conference}, Orlando, FL, USA, (pp. 2547--552).

\bibitem[Mazo, \& Cao (2012)]{Mmc1}
Mazo, M. \& Cao, M. (2012).
\newblock Decentralized event-triggered control with one bit communications.
\newblock \emph{In Proc. 4th IFAC Conf. on Analysis and Design of Hybrid Systems}, TU Eindhoven, Netherlands, (pp. 52--57).


\bibitem[Mazo, \& Tabuada (2011)]{Mmp}
Mazo, M., \& Tabuada, P. (2011).
\newblock Decentralized event-triggered control over wireless sensor/actuator networks.
\newblock \emph{IEEE Trans. Autom. Control}, 56(10),\penalty0 2456--2461.


\bibitem[Meng, \& Chen (2013)]{Mxy}
Meng, X.Y., \& Chen, T.W. (2013).
\newblock Event based agreement protocols for multi-agent networks.
\newblock \emph{ Automatica}, 49, \penalty0 2125--2132.


\bibitem[Nowzari, \& Cort\'{e}s (2014)]{Now}
Nowzari, C., \& Cort\'{e}s, G. (2014).
\newblock Zeno-free, distributed event-triggered communication and control for multi-agent average consensus.
\newblock \emph{In Proc. Amer. Control Conf., ACC},Portland, Oregon, USA (pp. 2148--2153).

\bibitem[Olfati-Saber, \& Murray (2004)]{Ros}
Olfati-Saber, R., \& Murray, R. M. (2004).
\newblock Consensus Problems in Networks of Agents With Switching Topology and Time-Delays.
\newblock \emph{IEEE Trans. Autom. Control}, 49(9),\penalty0 1520--1533.

\bibitem[Persis, et al. (2013)]{Crf}
Persis, C., Sailer, R., \& Wirth, F. (2013).
\newblock Parsimonious event-triggered distributed control: A zeno free approach.
\newblock \emph{Automatica}, 49(7),\penalty0 2116--2124.


\bibitem[Seyboth, et al. (2013)]{Gss}
Seyboth, G. S., Dimarogonas, D. V., \& Johansson, K. H. (2013).
\newblock Event-based broadcasting for multi-agent average consensus.
\newblock \emph{Automatica}, 49,\penalty0 245-252.

\bibitem[Tabuada (2007)]{Pta}
Tabuada, P. (2007).
\newblock Event-triggered real-time scheduling of stabilizing control tasks.
\newblock \emph{IEEE Trans. Autom. Control}, 52(9),\penalty0 1680--1685.


\bibitem[Tallapragada, \& Chopra (2013)]{Ptn}
Tallapragada, P., \& Chopra, N. (2013).
\newblock Event-triggered dynamic output feedback control of LTI systems over Sensor-Controller-Actuator Networks.
\newblock \emph{In Proc. 52nd IEEE Conf. Decision and Control, CDC}, Florence, Italy.



\bibitem[Wang, \& Lemmon (2009)]{Xwmd}
Wang, X., \& Lemmon, M.D. (2009).
\newblock Self-triggered feedback control systems with finite-gain $\mathcal L_{2}$ stability.
\newblock \emph{IEEE Trans. Autom. Control}, 45(3),\penalty0 452--467.



\bibitem[Ren, \& Beard (2008)]{Ren}
Ren, W. , \& Beard, R. W. (2008). Distributed Consensus in Multi-Vehicle Cooperative Control. Communications and Control
Engineering. Springer. ISBN 978-1-84800-014-8. P.28 and P.31.


\bibitem[Zhu, et al. (2014)]{Zzf}
Zhu, W., Jiang, Z.P., \& Feng, G. (2014).
\newblock Event-based consensus of multi-agent systems with general linear models.
\newblock \emph{ Automatica}, 50, \penalty0 552--558.
\end{thebibliography}

\end{document}